\newcolumntype{C}[1]{>{\centering\let\newline\\\arraybackslash\hspace{0pt}}m{#1}}
\begin{document}

\title{Complexity of Manipulation in Elections with Top-truncated Ballots}
\titlerunning{Complexity of Manipulation in Elections with Top-truncated Ballots}

\author{Vijay Menon \and Kate Larson}

\institute{David R. Cheriton School of Computer Science, University of Waterloo, \\ Waterloo, Ontario, Canada \\ \mailsa
}

\maketitle

\begin{abstract}
In the computational social choice literature, there has been great interest in understanding how computational complexity can act as a barrier against manipulation of elections. Much of this 
literature, however, makes the assumption that the voters or agents specify a complete preference ordering over the set of candidates. There are many multiagent systems applications, and even 
real-world elections, where this assumption is not warranted, and this in turn raises the question ``How hard is it to manipulate elections if the agents reveal only partial preference orderings?". 
It is this question we try to address in this paper. In particular, we look at the weighted manipulation problem -- both constructive and destructive manipulation -- when the voters are allowed to 
specify any top-truncated ordering over the set of candidates. We provide general results for all scoring rules, for elimination versions of all scoring rules, for the plurality with runoff rule, for 
a family of election systems known as Copeland$^{\alpha}$, and for the maximin protocol. Finally, we also look at the impact on complexity of manipulation when there is uncertainty about the 
non-manipulators' votes.  
\end{abstract}

\section{Introduction}
Preference aggregation is an important problem in multiagent settings as there are many scenarios where a group of agents has to make a common decision. The process of arriving at this decision, in 
turn, has to accommodate the needs and preferences of all the participating agents. A natural, and commonly used, mechanism to achieve this is voting, where all the agents specify their preferences 
and a previously agreed-upon procedure -- called the election system or voting protocol -- is used to arrive at the decision. Although voting is a useful and widely used mechanism, it is not without 
its problems. One particular issue that arises is manipulation or strategic voting by the agents who, by misreporting their true preferences, may attempt to sway the outcome of the election in their 
favor. Although every reasonable voting system is known to be manipulable by the Gibbard-Satterthwaite theorem, a goal of a recent body of literature -- which started with Bartholdi et. al's paper 
\cite{bartholdi} -- has been to understand if and when computational complexity can be used as a barrier against strategic voting (see \cite{falis3}, \cite{falis2} for surveys).

A common assumption in much of the research in computational social choice is that the agents fully specify their preferences by providing a complete preference ordering over all the candidates or 
alternatives. However, there are many practical situations where the agents may not be able to determine a complete ranking over all the candidates or even if they can specify a complete ranking, 
the voting rule used may not insist that one be provided. Thus, it is important to understand the repercussions of having ``partial votes'' in general. While a ``partial vote'' can refer to any 
partial ordering over the set of candidates, in this paper we focus on one kind of ``partial vote'' namely, top-truncated votes. Top-truncated votes are natural in many settings where an agent 
is certain about its most preferred candidates but is indifferent among the remaining ones or is unsure about them.

There has been some work that has looked at election problems when preferences are only partially specified. Among them, the work by Konczak and Lang introduced the possible and necessary winners 
problem \cite{konczak}, and Xia and Conitzer extended this further to study the possible and necessary winners problem for many different voting rules when the number of candidates are unbounded and 
the elections unweighted \cite{xia}. Additionally, Lu and Boutilier have looked at the multi-winner problem when only partial preferences are provided \cite{lu}. There are two other papers that are 
closely related to our work. First is the work by Baumeister et al. which discusses planning various kinds of campaigns in settings where the ballots can be truncated at the top, bottom or 
both \cite{baumeister}. In this work they introduced the extension-bribery problem, a special case of which is closely related to the manipulation problem with top-truncated ballots that we 
consider here. The second related work and the one which is the main motivation behind our work is that of Narodytska and Walsh where they provide an analysis of constructive manipulation (for both 
weighted and unweighted voters) for three particular voting protocols: Borda, STV, and the Copeland rule \cite{nina}. In addition to the above mentioned works, we are also aware of a very recent 
paper\footnote{This paper was obtained through personal communication.} by Fitzsimmons and Hemaspaandra which looks into how the complexity of bribery, control, and manipulation is affected when ties 
are allowed \cite{fitzsimmons}. We note that except for one theorem (Theorem \ref{copeland-alpha}), none of our other results overlap with theirs as all their results are derived using only one the 
following protocols: Borda, plurality, $t$-approval, and Copeland$^\alpha$. 

In this paper, we look at broader classes of voting rules and we study both constructive and destructive manipulation in weighted elections. In doing so, we provide general results for the complexity 
of manipulation for all scoring rules, for elimination versions of all scoring rules, for the plurality with runoff rule, for a family of election systems known as Copeland$^{\alpha}$, and for the 
maximin protocol. Additionally, we also look at the impact on complexity of manipulation with top-truncated ballots when there is uncertainty about the non-manipulators' votes, and, to the best of 
our knowledge, are the first to study the same. Because of space constraints, many of our proofs have been moved to the appendix.

\section{Preliminaries}
We model an election as a pair $E = (C, V)$, where $C = \{c_1, \cdots, c_m\}$ is the set of candidates and $V = \{v_1, \cdots, v_n\}$ is the set of voters. Each voter $v_i$ has a 
preference order $O_i$ on $C$. $O_i$ is said to be a complete order (or a complete vote) when it is antisymmetric, transitive, and a total ordering on $C$. In this paper we also consider 
top-truncated orders (or simply top orders), meaning that $O_i$ can be a linear order over any non-empty subset of $C$ and where all the unranked candidates are tied and are assumed to be ranked 
below the ranked candidates. For example, consider an election scenario with $C = \{c_1, c_2, c_3\}$. A voter $v_i$ who prefers $c_2$ the best and dislikes $c_1$ the most has a complete ordering $O_i$ 
which is represented by $(c_2 \succ c_3 \succ c_1)$ or sometimes simply $(c_2, c_3, c_1)$, while another voter $v_j$ who likes $c_3$ but has no opinion on $c_1$ and $c_2$ has a top-truncated ordering 
$O_j$ given by $(c_3)$. A preference profile is a vector $P = \langle O_1, \cdots, O_n\rangle$ of individual preferences. Since this paper considers weighted manipulation, additionally every voter 
$v_i$ has a non-negative integer weight $w_i$ associated with them. 

\subsection{Voting Protocols}
A voting protocol is a function defined from the set of all preference profiles to the set of winners, where the winner set can be any subset over the set of candidates $C$. The following are the 
commonly-studied voting rules that we consider in this paper. For each of them, we first define them on complete orderings and then talk about how the evaluations are done when there are top orders. 
\begin{enumerate}
 \item \textbf{Positional scoring rules:} A positional scoring rule is defined by a scoring vector $\alpha = \langle \alpha_1, \allowbreak \cdots, \alpha_m \rangle$, where $\alpha_1 \geq \cdots \geq 
\alpha_m$. For each voter $v$, a candidate receives $\alpha_i$ points if it is ranked in the $i$th position by $v$. In a scoring rule, the candidate with highest total score $s_i$ is the winner. Some 
examples of scoring rules are the \textit{plurality rule} with $\alpha = \langle 1, 0, \cdots, 0\rangle$, the \textit{Borda rule} with $\alpha = \langle m-1, m-2, \cdots, 0\rangle$, and the 
\textit{veto rule} with $\alpha = \langle 1, \cdots, 1, 0\rangle$.

To deal with top-truncated ballots where a voter ranks only $k$ out of the $m$ candidates ($k < m$), we consider the following three schemes that were used by Narodytska and Walsh in their 
preliminary work on manipulation with top-truncated preferences \cite{nina} (see \cite{emerson} to see how the following schemes fare for the Borda rule). As we will show in the 
results, the choice of evaluation scheme does have an impact on the hardness of manipulation.

\begin{enumerate}
 \item \textbf{Round up:} A candidate ranked in the $i$th position ($i\leq k)$ receives a score of $\alpha_i$, while all the unranked candidates receive a score of $\alpha_m$. For example, consider 
an election with $C = \{c_1, c_2, c_3, c_4\}$. Let a voter $v$ give a preference ordering $(c_3, c_1)$. In this case, $c_3$ receives a score of $\alpha_1$, $c_1$ receives a score of $\alpha_2$, and 
both $c_2$ and $c_4$ receive $\alpha_4$. For any positional scoring rule $X$, we denote this by $X_{\uparrow}$.  

 \item \textbf{Round down:} A candidate ranked in the $i$th position ($i\leq k)$ receives a score of $\alpha_{m-(k-i)-1}$, while all the unranked candidates receive a score of $\alpha_m$. For the 
example above, in this case $c_3$ receives a score of $\alpha_2$, $c_1$ receives a score of $\alpha_3$, and both $c_2$ and $c_4$ receive a score of $\alpha_4$. For any positional scoring rule 
$X$, we denote this by $X_{\downarrow}$. 

 \item \textbf{Average score:} A candidate ranked in the $i$th position ($i\leq k)$ receives a score of $\alpha_{i}$, while all the unranked candidates receive a score of $\frac{\textstyle\sum_{k < j 
\leq m}\alpha_j}{m - k}$. For the example above, in this case $c_3$ receives a score of $\alpha_1$, $c_1$ receives a score of $\alpha_2$, and both $c_2$ and $c_4$ receive a score of 
$\frac{\alpha_3 + \alpha_4}{2}$. For any positional scoring rule $X$, we denote this by $X_{av}$. 
\end{enumerate}

  \item \textbf{Scoring elimination rules:} Let $X$ be any scoring rule. Given a complete ordering, eliminate($X$) is the rule that successively eliminates the candidate placed in the last place by 
$X$. Once a candidate is eliminated, the rule is then repeated with the reduced set of candidates until there is a single candidate left. Some examples of scoring elimination rules are the 
\textit{Single Transferable Vote} (STV) which is basically eliminate(plurality) and the \textit{Baldwin's rule} which is eliminate(Borda).

In scoring elimination rules, we deal with top-truncated votes by using a method given by Narodytska and Walsh which is analogous to rounding up for scoring rules \cite{nina}. Here, we consider a 
vote to be valid only until at least one of the candidates listed in it is remaining in the election. In other words, we simply ignore a vote once all the candidates listed in it are eliminated.

\item \textbf{Plurality with runoff:} The plurality with runoff rule proceeds in two steps. In the first step, all the candidates except the top two with the most number of first votes are 
eliminated. This is followed by a transfer of votes to the second round where the winner is determined using the majority rule. 

Top-truncated votes here are dealt in the same way as they are done for scoring elimination rules.

  \item \textbf{Copeland$^\alpha$:} Faliszewski et al. \cite{falis} introduced a family of election systems known as Copeland$^{\alpha}$ by introducing a parameter $\alpha$ ($\alpha \in 
\mathbb{Q}$, $0\leq \alpha \leq 1$) that essentially describes the value of a tie. In Copeland$^{\alpha}$, for each pair of candidates, the candidate preferred by the majority receives one point and 
the other one receives a 0. In case of a tie, both receive $\alpha$ points. The winner in a Copeland$^\alpha$ election is one with the highest score. Usually when papers use the Copeland rule they 
essentially mean Copeland$^{0.5}$ which was the original rule proposed by Copeland. 

For Copeland$^\alpha$, and the maximin rule below, we can deal with top-truncated votes by just sticking to the definition which assumes that all the unranked candidates are tied and are ranked below 
the ranked candidates.

\item \textbf{Maximin:} Let $N_P(c_i, c_j)$ denote the number of voters who prefer $c_i$ over $c_j$ in the preference profile $P$. Then the maximin score of $c_i$ is $s_i = \min_{j \neq 
i} \allowbreak N_{P}(c_i, c_j)$. The winner in the maximin rule is the one with the highest score.

\end{enumerate}

\subsection{Manipulation}
In this paper, we consider two kinds of manipulation: constructive manipulation and destructive manipulation. Broadly, the goal in the former is to make a preferred candidate win, while in the 
latter it is to ensure that a certain disliked candidate does not win. More formally, we consider constructive weighted coalitional manipulation and destructive weighted coalitional 
manipulation which was first studied by Contizer et al. \cite{conitzer} and are described below.

\begin{definition}[CWCM]
In Constructive Weighted Coalitional Manipulation (CWCM), given a set of weighted votes $S$ (votes of the non-manipulators), the weights for a set of votes $T$ (manipulators' votes), and a preferred 
candidate $p$, we are asked if there exists a way to cast the votes in $T$ so that $p$ wins the election. 
\end{definition}

In this paper, unless otherwise specified, all the results are based on the non-unique winner model (where the objective is to make $p$ \textit{a} winner) which we use as our standard model. 

\begin{definition}[DWCM]
In Destructive Weighted Coalitional Manipulation (DWCM), given a set of weighted votes $S$ (votes of the non-manipulators), the weights for a set of votes $T$ (manipulators' votes), and a disliked 
candidate $h$, we are asked if there exists a way to cast the votes in $T$ so that $h$ does not win the election. 
\end{definition}

\subsection{Computational Complexity}
In most of the proofs for NP-hardness in this paper, we use reductions from either the well-known NP-complete problem Partition or from a variant of the subset sum problem which we 
call Fixed-Difference Subset Sum. 

\begin{definition}[{Partition}]
 Given a set of non-negative integers $S = \{k_i\}_{1 \leq i \leq t}$ summing to $2K$, we are asked if there exists a subset $S_1$ of $S$ which sums to $K$. 
\end{definition}

\begin{definition}[{Fixed-Difference Subset Sum}]
 Given a set of non-negative integers $S = \{k_i\}_{1 \leq i \leq t}$ summing to $2K$, we are asked if there exists two disjoint subsets $S_1$, $S_2$ of $S$ such that $\sum S_1 - \sum S_2 = K$, where 
$\sum S_i$ denotes the sum of all the elements in the set $S_i$. 
\end{definition}
The NP-completeness of Fixed-Difference Subset Sum can be shown by a reduction from Partition. 
\begin{theorem} \label{FDSS}
 Fixed-Difference Subset Sum is NP-complete.
\end{theorem}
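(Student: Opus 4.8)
The plan is to establish membership in NP and then hardness by a many-one reduction from Partition, whose NP-completeness is standard. Membership is immediate: a nondeterministic machine guesses the two subsets $S_1, S_2$, checks that they are disjoint, and verifies $\sum S_1 - \sum S_2 = K$ in polynomial time.

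For hardness, the naive idea of handing the Partition instance to Fixed-Difference Subset Sum unchanged fails, because the subtractive freedom lets one reach the target $K$ even when no subset sums to $K$ (for instance $\{3,3,3,1\}$ with $K=5$ admits $\{3,3\}-\{1\}=5$, yet no subset sums to $5$). The reduction must therefore destroy all such spurious signed combinations while preserving genuine subset sums. First I would attach to each element a distinct high-order tag. Given a Partition instance $\{a_1,\dots,a_n\}$ with $\sum_i a_i = 2K$, fix a base $B$ that is large relative to $K$ (concretely $B = 12K+1$) and build the Fixed-Difference instance consisting of the $2n$ numbers $x_i = a_i + B^i$ and $y_i = B^i$ for $1 \le i \le n$. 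These sum to $2K + 2\sum_{i=1}^n B^i$, so the prescribed difference is forced to be exactly $K' = K + \sum_{i=1}^n B^i$, matching the form the problem requires automatically. Since each $B^i$ has $O(n\log K)$ bits, the construction is polynomial.

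The forward direction is routine: given a subset $T$ of indices with $\sum_{i\in T} a_i = K$, place $x_i$ in $S_1$ for $i\in T$ and $y_i$ in $S_1$ for $i\notin T$, take $S_2=\emptyset$, and observe that $\sum S_1 = K + \sum_i B^i = K'$. The hard part will be the converse, namely arguing that every feasible signed assignment is forced into this additive shape. The key is to write an arbitrary candidate solution as a signed sum, split it into a bounded value part $V \in [-2K,2K]$ (contributed only by the $a_i$ sitting inside the $x_i$) and a tag part $\sum_i c_i B^i$ with coefficients $c_i \in \{-2,\dots,2\}$, and use the equation $V + \sum_i c_i B^i = K + \sum_i B^i$. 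I would then show, via a geometric-series estimate that exploits $B$ being much larger than $K$, that if any $c_i \neq 1$ the tag part differs from $\sum_i B^i$ by at least $\tfrac14 B > 3K$, which cannot be cancelled by $V$; hence $c_i = 1$ for every $i$. Because $c_i = 1$ with each component in $\{-1,0,1\}$ forces every pair $(x_i,y_i)$ into one of the two patterns ``$x_i$ used positively, $y_i$ unused'' or ``$y_i$ used positively, $x_i$ unused'', no $a_i$ is ever subtracted, $V$ equals the sum of the included $a_i$, and $V = K$ produces a Partition subset. Pinning down $B$ so that the tag digits cannot interact with the bounded value part, and thereby eliminating exactly the subtractive solutions that make the identity reduction incorrect, is the crux of the argument.
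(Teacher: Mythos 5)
Your proof is correct, and at the top level it follows the same strategy as the paper's: reduce from Partition by pairing each $a_i$ with a duplicated high-order tag, so that reaching the target $K+\sum_i (\text{tag}_i)$ forces exactly one member of each pair to be used, positively, and thereby collapses the signed problem back to an ordinary subset-sum. The substantive difference is your choice of tags, and it is not cosmetic. The paper uses $l_i=k_i+2^{n+i}$, $l'_i=2^{n+i}$ with $n=\lceil\log 2K\rceil$ and asserts that no element can lie in $T_2$ because otherwise ``the second term will not be attainable''; but consecutive powers of two satisfy $2\cdot 2^{n+i}=2^{n+i+1}$, so signed tag coefficients in $\{-2,\dots,2\}$ \emph{can} cancel across positions and the assertion fails. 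Concretely, for the Partition instance $\{1,3\}$ (so $K=2$, $n=2$) the paper's construction yields $\{9,8,19,16\}$ with target $26$, and $T_1=\{19,16\}$, $T_2=\{9\}$ gives $35-9=26$ even though no subset of $\{1,3\}$ sums to $2$. Your base $B=12K+1$ is exactly what closes this gap: each deviation $c_i-1$ lies in $\{-3,\dots,1\}$, the lower-order tag terms satisfy $\sum_{i<j}3B^i<\tfrac14B^j$, and $\tfrac34 B^j>3K\ge|K-V|$, so a leading deviant digit cannot be cancelled either by lower-order digits or by the bounded value part, and $c_i=1$ is genuinely forced for all $i$. In short, your digit-separation lemma is not merely the crux of your write-up; it is the step the paper's own argument is missing, so do write out the geometric-series estimate explicitly rather than leaving it as a sketch.
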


\section{Constructive Manipulation} \label{constructivemanipulation}
In this section we look at the complexity of constructive manipulation when top-truncated ballots are allowed. We begin by looking at scoring rules and we completely characterize the complexity of 
manipulation for all 3-candidate scoring rules when using each of the evaluation schemes defined before. Note that in the complete votes case, all scoring rules except plurality are known to be 
NP-complete for $m\geq 3$ candidates \cite{conitzer,hemaspaandra}.

\subsection{Scoring Rules}

\begin{theorem} \label{SC-RU}
 For any positional scoring rule $X$, computing if a coalition of manipulators can manipulate $X_{\uparrow}$ with weighted top-truncated votes takes polynomial time (for any number of candidates). 
\end{theorem}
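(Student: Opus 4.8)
The plan is to show that under the round-up scheme, the optimal strategy for the coalition is essentially forced, so that a greedy polynomial-time algorithm suffices. First I would observe the key structural fact about $X_{\uparrow}$: when a manipulator submits a top-truncated vote ranking only some subset of candidates, every \emph{unranked} candidate receives the minimum score $\alpha_m$. This means that any candidate a manipulator does not place in its top positions contributes the lowest possible amount to that candidate's total. In particular, for the preferred candidate $p$, the best any manipulator can do is rank $p$ first (giving $\alpha_1$), and simultaneously the manipulator would like to hold down the scores of all rival candidates. Because unranked candidates all get $\alpha_m$, a manipulator can \emph{minimize} the score awarded to every rival simultaneously by simply \emph{not ranking any rival at all} and submitting the singleton vote $(p)$.

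\medskip
\noindent
This is the crux: under round-up, ranking additional candidates below $p$ can only give them a score of at least $\alpha_m$ (in fact more if ranked higher), whereas leaving them unranked gives them exactly $\alpha_m$. Hence there is no benefit, and possible harm, in any manipulator ranking more than $p$ alone. So I would argue by an exchange/dominance argument: given any successful manipulation, replacing each manipulator's vote by the singleton $(p)$ keeps $p$'s score unchanged (still $\alpha_1$ per manipulator) and can only weakly decrease every other candidate's score. Therefore if any casting of the votes in $T$ makes $p$ a winner, the casting in which every manipulator votes $(p)$ also makes $p$ a winner.

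\medskip
\noindent
Given this reduction, the algorithm is immediate and runs in polynomial time: each manipulator $v_i \in T$ casts the vote $(p)$, contributing weight $w_i \cdot \alpha_1$ to $p$ and $w_i \cdot \alpha_m$ to every other candidate. We then compute the resulting total score of every candidate (summing the fixed non-manipulator scores $S$ with these manipulator contributions) and check whether $p$ ties-or-beats every other candidate. This requires only $O(|T| + |S| + m)$ additions and a final scan over the $m$ candidates, which is polynomial in the input size and, crucially, holds for \emph{any} number of candidates and \emph{any} scoring vector $\alpha$.

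\medskip
\noindent
The main obstacle to watch for is the dominance argument itself: one must be careful that moving from an arbitrary manipulator vote to the singleton $(p)$ does not \emph{raise} any rival's score, and in particular that $p$'s own score does not drop. Since $\alpha_1 \geq \alpha_i$ for all $i$, ranking $p$ first is optimal for $p$; and since $\alpha_m$ is the minimum of the scoring vector, an unranked rival gets no more than a ranked one, so the swap is safe in both directions. A minor subtlety is the non-unique-winner tie-breaking convention (we only need $p$ to be \emph{a} winner), but this is compatible with the weak inequalities above. Once these observations are in place the result follows without further calculation.
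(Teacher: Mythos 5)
Your proposal is correct and matches the paper's argument: the paper's (one-line) proof is precisely that the singleton vote $(p)$ is a dominant strategy under round-up, so it suffices to check whether all manipulators voting $(p)$ makes $p$ a winner. You have simply spelled out the dominance/exchange reasoning that the paper leaves implicit.
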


\begin{proof}
 The manipulators can simply check if all of them voting for $p$ alone will make it a winner. If not, they cannot make $p$ a winner. \qed
\end{proof}

\begin{theorem}
 For the plurality and veto protocol, computing if a coalition of manipulators can manipulate plurality$_{\downarrow}$ or veto$_{\downarrow}$ with weighted top-truncated votes takes 
polynomial time (for any number of candidates). 
\end{theorem}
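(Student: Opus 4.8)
The plan is to show that, under the round-down scheme, both plurality$_{\downarrow}$ and veto$_{\downarrow}$ collapse to essentially the same trivial situation as in Theorem~\ref{SC-RU}: each manipulator has a single dominant vote that pushes all of its weight onto $p$ while contributing nothing to any other candidate, so the manipulators only need to check whether that one joint configuration makes $p$ a winner.

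First I would pin down exactly what a top-truncated vote does under round-down for each of the two rules. Recall that a vote ranking $k$ candidates assigns to the candidate in position $i$ the score $\alpha_{m-(k-i)-1}$, so positions $1,\dots,k$ receive the scores $\alpha_{m-k},\alpha_{m-k+1},\dots,\alpha_{m-1}$, while the $m-k$ unranked candidates all receive $\alpha_m$. For plurality, where $\alpha_1=1$ and $\alpha_2=\dots=\alpha_m=0$, a ranked candidate gets a nonzero score only when one of the indices $m-k,\dots,m-1$ equals $1$, i.e.\ only when $k\ge m-1$ and only for the top-ranked candidate. Hence a plurality$_{\downarrow}$ voter either awards its full weight to a single candidate (the one it ranks first, when $k\in\{m-1,m\}$) or awards nothing at all; no vote can ever transfer weight to more than one candidate. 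For veto, where $\alpha_1=\dots=\alpha_{m-1}=1$ and $\alpha_m=0$, the indices $m-k,\dots,m-1$ all lie in $\{1,\dots,m-1\}$, so every one of the $k$ ranked candidates receives $1$ and only the unranked candidates receive $0$; thus a veto$_{\downarrow}$ voter gives score $1$ to exactly the candidates it lists and can veto (give $0$ to) every other candidate, in particular all $m-1$ opponents of $p$ by listing $p$ alone.

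With this in hand, the argument mirrors Theorem~\ref{SC-RU}. For plurality$_{\downarrow}$, each manipulator's best move toward making $p$ win is to rank $p$ first (with $k\ge m-1$), which adds its full weight to $p$ and nothing to anyone else; any alternative either adds $0$ to $p$ or adds weight to some rival, so it is weakly worse. For veto$_{\downarrow}$, each manipulator should list $p$ alone, which again adds its full weight to $p$ and $0$ to every rival. In both cases the joint strategy ``every manipulator puts all of its weight on $p$ and nothing on anyone else'' simultaneously maximizes $p$'s total score and minimizes every rival's total score, so it succeeds if and only if manipulation is possible at all. Checking this single configuration is clearly polynomial.

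The only real obstacle is the round-down index bookkeeping, and in particular confirming that no point can ``leak'' to a non-$p$ candidate. For plurality$_{\downarrow}$ I would verify carefully that ranking fewer than $m-1$ candidates yields an all-zero contribution (so such votes are useless), while ranking exactly $m-1$ or all $m$ candidates isolates the single point on the first-ranked candidate; for veto$_{\downarrow}$ I would confirm that every listed candidate, and only the listed candidates, receives a point, so that listing $p$ alone is strictly the strongest pro-$p$ vote. Once these two facts are checked, the existence of a dominant joint manipulator action --- and hence the polynomial-time algorithm --- follows immediately.
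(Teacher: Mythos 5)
Your proof is correct and takes essentially the same approach as the paper: the paper simply asserts that for veto$_{\downarrow}$ all manipulators should vote $(p)$ alone and for plurality$_{\downarrow}$ they should rank $p$ first in a complete order, then check that single configuration. Your version adds the explicit round-down index bookkeeping justifying why these votes are dominant, which the paper leaves implicit.
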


\begin{proof}
 For the veto protocol, the manipulators can simply check if all of them voting for $p$ alone will make it a winner. If not, they cannot make $p$ a winner.
 
 In case of plurality, they can check if all of them placing $p$ at the top and all the other candidates in arbitrary order can make $p$ a winner. \qed
\end{proof}

\begin{theorem} \label{rdThm}
 For any 3-candidate positional scoring protocol $X$ that is not isomorphic to plurality or veto, CWCM with top-truncated votes in $X_{\downarrow}$ is NP-complete.
\end{theorem}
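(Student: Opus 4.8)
The plan is to prove membership in NP and then NP-hardness by a reduction whose shape depends on the scoring rule's middle weight. Since positional scoring rules are invariant under affine transformations of the scoring vector, I would first normalize any $3$-candidate rule to $\langle 1,\alpha,0\rangle$; being non-isomorphic to plurality or veto is then exactly the condition $0<\alpha<1$. Membership in NP is routine, as a certificate is the list of manipulator votes and the $X_{\downarrow}$ winner is computable in polynomial time.

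Next I would determine the moves a manipulator actually has under round-down on three candidates. A complete vote assigns scores $1,\alpha,0$; a two-candidate top-truncated vote assigns, under round-down, the scores $\alpha_1,\alpha_2,\alpha_3=1,\alpha,0$ as well, so it is equivalent to a complete vote; and a single-candidate vote $(c)$ gives $c$ a score of $\alpha$ and the other two candidates $0$. Since the manipulators want to help $p$, the only undominated moves rank $p$ first or list $p$ alone, giving for the triple $(p,a,b)$ the three score vectors $M_1=(1,\alpha,0)$, $M_2=(1,0,\alpha)$, and $M_3=(\alpha,0,0)$. Written as their pairs of margins $(s_p-s_a,\,s_p-s_b)$, these are $(1-\alpha,1)$, $(1,1-\alpha)$, and $(\alpha,\alpha)$.

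The reduction then splits on whether $\alpha\le 1/2$. For $0<\alpha\le 1/2$ the single-candidate move $M_3$ is dominated componentwise by $M_1$, since its margins $(\alpha,\alpha)$ are at most $(1-\alpha,1)$; hence an optimal coalition never truncates to one candidate, the set of achievable outcomes coincides with the complete-vote model, and hardness is inherited from the known NP-completeness of $3$-candidate CWCM for scoring rules with $0<\alpha<1$ \cite{conitzer,hemaspaandra}. For $1/2<\alpha<1$ the move $M_3$ is Pareto-undominated, giving the manipulators a genuine third option, and here I would reduce from Fixed-Difference Subset Sum (Theorem~\ref{FDSS}): each integer $k_i$ becomes a manipulator of weight $k_i$, and the three choices ``$k_i\in S_1$'', ``$k_i\in S_2$'', ``$k_i$ in neither'' are encoded by the moves $M_2$, $M_1$, $M_3$ respectively. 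With non-manipulator votes chosen to give $p,a,b$ suitable initial scores, $p$'s two winning conditions $S_p\ge S_a$ and $S_p\ge S_b$ become a pair of inequalities in $A=\sum S_2$ and $B=\sum S_1$ that I would calibrate to hold simultaneously exactly when $\sum S_1-\sum S_2=K$.

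The main obstacle is this last calibration for general $\alpha\in(1/2,1)$. Writing $c_a,c_b$ for the initial-score offsets, the winning inequalities take the form $c_a+(1-\alpha)B-(2\alpha-1)A\ge 0$ and $c_b+(1-\alpha)A-(2\alpha-1)B\ge 0$; at the balanced value $\alpha=2/3$ these collapse to $\tfrac13(B-A)\ge -c_a$ and $\tfrac13(B-A)\le c_b$, so setting $c_a=-K/3$ and $c_b=K/3$ pins $B-A=K$ cleanly and finishes the reduction. For $\alpha\ne 2/3$ the coefficients of $A$ and $B$ differ, the feasible region is a two-dimensional cone rather than a line, and I must choose the target scores, and possibly pad the instance or scale the weights, so that this cone together with the inherent constraints $A,B\ge 0$ and $A+B\le 2K$ meets the subset-sum-achievable points precisely at the fixed-difference solutions. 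Verifying both directions of this equivalence, and confirming that no manipulation using the dominated or non-$p$-first votes can slip in, is where the real work lies.
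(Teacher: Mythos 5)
Your setup matches the paper's: the same normalization, the same three undominated manipulator moves $(p,a,b)$, $(p,b,a)$, $(p)$ with $(p)$ awarding $p$ the middle score $\alpha$, and the same recognition that $\alpha=2/3$ (the paper's $\alpha_1=\tfrac32\alpha_2$) is the knife-edge where the two winning inequalities collapse to the single line $B-A=K$ and a Fixed-Difference Subset Sum reduction closes the case. Your treatment of $0<\alpha\le 1/2$ is a legitimately different and arguably cleaner route than the paper's: the componentwise domination of $(\alpha,\alpha)$ by $(1-\alpha,1)$ shows the top-truncated and complete-vote problems have identical yes-instances there, so hardness is inherited directly from \cite{conitzer,hemaspaandra} rather than re-proved by an explicit Partition gadget.

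The genuine gap is the range $\alpha\in(1/2,1)\setminus\{2/3\}$, which you explicitly leave open (``where the real work lies''), and your proposed tool for it --- Fixed-Difference Subset Sum with calibrated offsets --- is the wrong one: as you observe, for $\alpha\ne 2/3$ the feasible region is a cone, not a line, so no choice of offsets makes the two inequalities pin down $B-A=K$ exactly. The paper instead reduces from \emph{Partition} on either side of $2/3$ and exploits the asymmetry rather than fighting it: writing $x,y,z$ for the weights on $(p,a,b)$, $(p,b,a)$, $(p)$ and \emph{summing} the two winning inequalities yields a single inequality of the form $z(2\alpha_1-3\alpha_2)\le 0$ (for $\alpha<2/3$) or $(W-z)(2\alpha_1-3\alpha_2)\ge 0$ (for $\alpha>2/3$, with $W$ the total manipulator weight), whose sign forces $z$ to the extreme value $0$ or $W/2$ respectively; substituting that value back into the individual inequalities then forces an exact partition. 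So for $\alpha\in(1/2,2/3)$ your domination argument no longer applies and you have no reduction, and for $\alpha\in(2/3,1)$ your calibration problem is real but is resolved by this forcing argument, with the partition encoded by the split between $(p,b,a)$-voters and $(p)$-voters rather than between $S_1$ and $S_2$ of a fixed-difference instance. Without that idea the proof is incomplete on a full subinterval of parameters.
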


\begin{theorem}
Computing if a coalition of manipulators can manipulate plurality~$_{av}$ with weighted top-truncated votes takes polynomial time (for any number of candidates). 
\end{theorem}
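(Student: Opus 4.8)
The plan is to observe that under the average-score scheme the plurality rule degenerates into ordinary plurality, after which the standard greedy manipulation argument applies. First I would unpack the scoring. For plurality the scoring vector is $\alpha = \langle 1, 0, \ldots, 0\rangle$, so $\alpha_j = 0$ for every $j \geq 2$. Consider a top-truncated ballot ranking $k \geq 1$ candidates: the candidate in first position receives $\alpha_1 = 1$, every other ranked candidate (positions $2$ through $k$) receives $\alpha_j = 0$, and every unranked candidate receives the average $\frac{\sum_{k < j \leq m}\alpha_j}{m-k}$, which is $0$ because each summand $\alpha_j$ with $j > k \geq 1$ already equals $0$. Hence a plurality$_{av}$ ballot awards exactly one point to its top candidate and zero to all others, no matter how many candidates are listed below the top. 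In other words, plurality$_{av}$ coincides with ordinary weighted plurality, where the only information a ballot carries is the identity of its most-preferred candidate.

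Given this collapse, the algorithm is the familiar greedy one: instruct every manipulator to rank $p$ first (filling the remaining positions arbitrarily, as they are irrelevant) and then check whether $p$ is a winner of the resulting election. This runs in polynomial time. For correctness I would give the usual domination argument: placing $p$ atop every manipulator's ballot simultaneously maximizes $p$'s total score -- each manipulator contributes its full weight to $p$ -- and leaves the score of every rival at exactly its non-manipulator value, since the lower portion of a plurality$_{av}$ ballot contributes nothing. No alternative casting of the votes in $T$ can give $p$ more points or give any rival fewer, so if $p$ fails to win under this profile it cannot win under any casting.

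The main, and essentially only, point to verify is this domination claim; once the reduction to plurality is in place there is no genuine obstacle, which is precisely why the statement falls on the polynomial-time side, in contrast with the NP-completeness of the 3-candidate round-down case in Theorem~\ref{rdThm}. It also mirrors the round-up result of Theorem~\ref{SC-RU}, the only difference being that there the collapse is forced by unranked candidates all receiving $\alpha_m$, whereas here it is forced by the averaged tail of the plurality vector being $0$.
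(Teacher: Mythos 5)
Your proposal is correct and matches the paper's proof, which simply has all manipulators vote for $p$ and checks whether $p$ wins; your observation that the averaged tail of the plurality vector is zero (so plurality$_{av}$ collapses to ordinary weighted plurality) is the justification the paper leaves implicit. Nothing further is needed.
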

\begin{proof}
 The manipulators can simply check if all of them voting for $p$ alone will make it a winner. If not, they cannot make $p$ a winner. \qed
\end{proof}

\begin{theorem} \label{avThm}
 For any 3-candidate positional scoring protocol $X$ that is not isomorphic to plurality, CWCM with top-truncated votes in $X_{av}$ is NP-complete.
\end{theorem}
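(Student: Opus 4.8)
The plan is to prove NP-completeness, so I begin with the easy direction: membership in NP is immediate, since a certificate is the list of manipulators' ballots, and tallying the scores under $X_{av}$ and checking that $p$ is among the winners takes polynomial time. The substance is the hardness reduction, which I would draw from Fixed-Difference Subset Sum (Theorem~\ref{FDSS}). First I would normalize the rule. A positional scoring rule is invariant under adding a constant to, or positively scaling, the whole vector $\langle\alpha_1,\alpha_2,\alpha_3\rangle$; moreover under $X_{av}$ every (possibly truncated) ballot still distributes the same total $\alpha_1+\alpha_2+\alpha_3$, so these transformations shift or scale every candidate's score uniformly and preserve the winner set. Hence I may assume $\alpha_3=0$ and $\alpha_1=1$, writing $\alpha_2=\lambda$. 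The hypothesis that $X$ is not isomorphic to plurality is precisely $\lambda>0$ (plurality being $\lambda=0$), and I take the rule to be non-degenerate, $\alpha_1>\alpha_3$.

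Next I would cut the ballot space down to size. With three candidates a $2$-truncated ballot $c\succ d$ gives the third candidate $\alpha_3$, so it is identical in effect to the complete ballot $c\succ d\succ e$; the only genuinely new ballot is the $1$-truncated one, which awards its top candidate $1$ and each other candidate $\lambda/2$. A short domination argument then shows that, to make $p$ win, every manipulator may be assumed to rank $p$ first: writing each ballot's per-unit score triple $(p,a,b)$, every ballot not topped by $p$ is weakly dominated (same or higher $p$, no higher $a$ or $b$) by one of the three $p$-first ballots $p\succ a\succ b=(1,\lambda,0)$, $p\succ b\succ a=(1,0,\lambda)$, and ``$p$ only''$=(1,\lambda/2,\lambda/2)$. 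Partitioning the manipulators into the groups $A,B,H$ casting these three ballots, with weights $W_A,W_B,W_H$, the score added to $a$ is $\lambda(W_A+W_H/2)$ and to $b$ is $\lambda(W_B+W_H/2)$. Setting $x=W_A+W_H/2$ and $y=W_B+W_H/2$, the two algebraic facts I would lean on are $x+y=W$ (the total manipulator weight) and $x-y=W_A-W_B$.

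For the reduction, given an FDSS instance $\{k_i\}$ with sum $2K$ and target difference $K$, I would introduce one manipulator of weight $2k_i$ per element (scaling the whole instance by the denominator of $\lambda$ to keep all scores integral), so that $W=4K$ and $x=3K,\ y=K$ hold exactly when $W_A-W_B=2K$. I would then engineer the non-manipulator ballots to fix base scores $P_0,a_0,b_0$ for $p,a,b$ with $a_0=P_0+4K-3\lambda K$ and $b_0=P_0+4K-\lambda K$ (both above $P_0$ since $\lambda\le1$, as one expects). Under the non-unique-winner model this makes $\mathrm{score}(a)\le\mathrm{score}(p)\iff x\le 3K$ and $\mathrm{score}(b)\le\mathrm{score}(p)\iff y\le K\iff x\ge 3K$, so $p$ is a winner iff $x=3K$, i.e.\ iff $W_A-W_B=2K$, i.e.\ iff the original set admits disjoint $S_1,S_2$ with $\sum S_1-\sum S_2=K$. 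Realizing the prescribed base scores with polynomially many integer-weighted ballots (bulk voters ranking $a$ and $b$ above $p$ to open the required gaps, plus $p$-topped voters to fix $P_0$) is routine bookkeeping that I would relegate to a short paragraph.

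The step I expect to be the crux — and the reason a naive Partition reduction fails here while FDSS works — is the averaging option, group $H$. Because $x+y=W$ is forced, a manipulator can never alter the \emph{sum} of the scores given to $a$ and $b$, only their \emph{difference} $x-y=W_A-W_B$, to which group $H$ contributes $0$. A balance requirement ``$x=y$'' would thus be trivially met by placing every manipulator in $H$, which is exactly why demanding a balanced partition collapses. By contrast, a fixed nonzero difference $W_A-W_B=2K$ cannot be shortcut by split ballots and genuinely forces the choice of two subsets, which is precisely Fixed-Difference Subset Sum. Converting this observation into the reduction amounts to the two tight score equations above, which pin the feasible region down to the single value $x=3K$; getting that pinning exactly right (so no neighbouring achievable value of $x$ sneaks in) is the delicate part of the argument.
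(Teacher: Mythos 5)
Your proposal is correct and follows essentially the same route as the paper's proof: a reduction from Fixed-Difference Subset Sum in which every manipulator tops their ballot with $p$, the truncated ballot $(p)$ splits $\alpha_2$ evenly between $a$ and $b$ so that only the \emph{difference} of the $(p,a,b)$- and $(p,b,a)$-weights is controllable (your explanation of why Partition alone would collapse is exactly the paper's implicit reason for using FDSS), and the non-manipulators' scores pin that difference to exactly $K$ via two opposing inequalities. The one piece you defer as ``routine bookkeeping''---realizing your prescribed base-score gaps with nonnegative integer weights for every admissible $\alpha_2$---is precisely what the paper does explicitly (weights $(4\alpha_1+\alpha_2)K$, $(2\alpha_1-\alpha_2)K$, and $2(\alpha_1+\alpha_2)K$ on $(b,a,p)$, $(a,b,p)$, and $(a,p,b)$), and it does require choosing the gaps and ballot types together so that no weight goes negative across the whole range $0<\alpha_2\leq\alpha_1$.
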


\paragraph{}
Although both Theorem \ref{rdThm} and Theorem \ref{avThm} can be proved by constructing much simpler instances (by using the fact that even the non-manipulators can cast top-truncated votes), the 
reason we use these instances will be clear in Section \ref{uncertaintyNM} when we look at the impact on manipulation when there is uncertainty about the non-manipulators' votes. 

\subsection{Scoring Elimination Rules}
We now consider scoring elimination rules and first look at how top-truncated voting affects the complexity of manipulation in eliminate(veto). Following that we prove a general result for all 
scoring elimination rules. 

\begin{theorem} \label{vetoThm}
 For eliminate(veto), in the unique winner model, any manipulation that can be achieved by casting top-truncated votes can be achieved if only complete votes were allowed. 
\end{theorem}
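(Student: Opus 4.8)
The plan is to take any successful top-truncated manipulation and rewrite every manipulator's ballot as a single, carefully ordered complete ballot that keeps $p$ maximally protected while reproducing the same elimination sequence. Concretely, suppose the manipulators' top-truncated votes make $p$ the unique winner, and let $e_1, e_2, \ldots, e_{m-1}$ be the order in which the other candidates are eliminated (so $e_1$ goes first and $p$ is the lone survivor). I would have every manipulator instead cast the complete ballot $p \succ e_{m-1} \succ \cdots \succ e_1$, that is, $p$ on top and the remaining candidates in reverse order of elimination. Under veto a complete ballot vetoes exactly its lowest-ranked surviving candidate in each round, so with this ballot the manipulators never veto $p$ and, in the round where the set of survivors is $\{p, e_r, \ldots, e_{m-1}\}$, they all veto precisely $e_r$.

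The heart of the argument is a round-by-round comparison. Write $W$ for the total weight of the manipulators and, for a given set of surviving candidates, let $B_c$ denote the veto weight that candidate $c$ receives from the (fixed) non-manipulators; note that $B_c$ depends only on which candidates survive, not on the manipulators' ballots. I would prove by induction on $r$ that the complete ballots eliminate exactly $e_1, \ldots, e_r$ in the first $r$ rounds, so that the surviving set entering round $r$ equals that of the original run and the values $B_c$ coincide. In round $r$ the new ballots give $e_r$ a veto total of $B_{e_r}+W$ and give every other survivor $c$ (including $p$) its base total $B_c$. Since $e_r$ was the candidate eliminated in the original run, its original veto total was at least that of every other survivor; combining this with the trivial bounds $(\text{original total of }e_r) \le B_{e_r}+W$ and $(\text{original total of }c) \ge B_c$ yields $B_{e_r}+W \ge B_c$ for every survivor $c$. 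Hence $e_r$ again attains the maximum veto total and is eliminated, closing the induction.

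It remains to confirm that $p$ truly wins at the end, and this is exactly where the unique-winner hypothesis enters. In the last round the survivors are $p$ and $e_{m-1}$, and the original run made $p$ the \emph{unique} winner, so $p$'s original veto total was strictly less than that of $e_{m-1}$. Using $B_p \le (\text{original total of }p)$ and $(\text{original total of }e_{m-1}) \le B_{e_{m-1}}+W$, this gives $B_p < B_{e_{m-1}}+W$, which are precisely the new veto totals of $p$ and $e_{m-1}$; thus $p$ strictly outscores $e_{m-1}$ and is the unique winner under the complete ballots.

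The step I expect to need the most care is the treatment of ties during elimination in the intermediate rounds: the inequality $B_{e_r}+W \ge B_c$ is only weak, so $e_r$ may merely tie for the maximum veto total, and one must argue that the elimination can still proceed with $e_r$ removed and, crucially, with $p$ retained. This is unproblematic under the usual convention that a constructive manipulation succeeds whenever some sequence of tie-breaks leaves $p$ the unique winner, since I can simply break ties the same way the original successful run did; stating the theorem against a fixed tie-breaking rule would instead require checking that the tied sets arising in the new run never force $p$ out, which the strict final-round inequality above already guarantees for the decisive last round. Everything else is bookkeeping, and no part of the argument uses more than the fact that a complete veto ballot removes weight from a single surviving candidate per round, whereas the original top-truncated ballots enter only through the one-sided bounds $0 \le (\text{manipulator contribution}) \le W$.
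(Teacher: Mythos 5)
Your proposal is correct and takes essentially the same route as the paper: replace every manipulator ballot by the complete order $p \succ e_{m-1} \succ \cdots \succ e_1$ (candidates in reverse elimination order below $p$) and argue that the original elimination order is reproduced. The paper simply asserts that the elimination order carries over, whereas you supply the round-by-round inductive verification via the bounds $B_c \le (\text{original total of } c) \le B_c + W$, together with the tie-breaking caveat and the use of the unique-winner hypothesis in the final round --- a more careful writeup of the same argument, not a different one.
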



Since top-truncated voting does not encourage more strategic voting, it follows that for bounded number of candidates we can use the result by Coleman and Teague who showed that CWCM for 
eliminate(veto) can is in $P$ when the votes are completely specified \cite{coleman}.  

\begin{corollary} \label{vetoCorr}
 In the unique winner model, computing if a coalition of manipulators can manipulate eliminate(veto) with weighted top-truncated votes takes polynomial time for bounded number of candidates. 
\end{corollary}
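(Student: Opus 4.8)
The plan is to derive the corollary directly from Theorem~\ref{vetoThm} together with the polynomial-time algorithm of Coleman and Teague~\cite{coleman} for the complete-vote case. The whole argument rests on establishing that, for eliminate(veto) in the unique winner model, allowing the manipulators to cast top-truncated ballots yields exactly the same set of achievable outcomes as restricting them to complete ballots, so that the two decision problems have identical answers on every instance; everything else is bookkeeping.

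First I would record the easy inclusion: a complete vote is itself a (degenerate) top-truncated vote in which every candidate happens to be ranked, so any outcome the manipulators can force using only complete votes they can also force once top-truncated votes are permitted. Theorem~\ref{vetoThm} supplies the reverse inclusion --- any manipulation achievable with top-truncated ballots is already achievable with complete ballots. Combining the two, for a fixed non-manipulator profile $S$ and fixed manipulator weights, $p$ can be made the unique winner using top-truncated manipulator votes if and only if $p$ can be made the unique winner using complete manipulator votes. This converts the given top-truncated CWCM instance into an equivalent complete-vote CWCM instance without altering the weights or the non-manipulators' ballots.

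Next I would invoke the bounded-candidates hypothesis. With $m$ fixed, the eliminate(veto) winner of any fixed profile can be computed in polynomial time by simulating the elimination rounds, so the non-manipulators' (possibly top-truncated) ballots can be treated as a fixed, efficiently evaluable input and only the manipulators' ballots need to be searched over. Since the equivalence above lets us restrict that search to complete votes, the problem we must actually solve is precisely CWCM for eliminate(veto) with complete votes, which Coleman and Teague~\cite{coleman} show to be in $P$ for a bounded number of candidates. Running their algorithm on the transformed instance therefore decides the original top-truncated instance in polynomial time.

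The main obstacle, such as it is, does not lie in this reduction but in Theorem~\ref{vetoThm}, which does the real work of showing that top-truncation grants the manipulators no extra power; once that theorem and the Coleman and Teague result are in hand, the corollary follows almost immediately. The one point requiring care is to keep the equivalence scoped to the \emph{manipulators'} strategy space while the non-manipulators' ballots remain fixed, and to note that the bounded-candidates assumption is genuinely needed: it is inherited directly from Coleman and Teague, whose running time ceases to be polynomial once the number of candidates, and hence the number of distinct complete orderings to search over, is unbounded.
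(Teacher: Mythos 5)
Your proposal is correct and follows essentially the same route as the paper: the paper derives the corollary in one line from Theorem~\ref{vetoThm} (top-truncated manipulator ballots grant no extra power in eliminate(veto) under the unique winner model) combined with Coleman and Teague's polynomial-time algorithm for the complete-vote case with bounded candidates. Your added care about keeping the equivalence scoped to the manipulators' ballots while the non-manipulators' (possibly truncated) profile stays fixed is a reasonable elaboration of what the paper leaves implicit, not a different argument.
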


We note that in the non-unique winner model, CWCM for 3-candidate eliminate(veto) can be shown to be NP-complete in both the complete and top-truncated voting scenarios.  

Next, we consider elimination versions of scoring rules in general and show that CWCM with top-truncated votes for $m \geq 3$ candidates is NP-complete for elimination version of any scoring 
rule that is not isomorphic to veto. For this, we first show that top-truncated voting does not change the complexity of Anti-WCM for any scoring rule. Subsequently, we use this result and an 
identical reduction as in \cite{coleman} to prove our main result.

\begin{definition}[{Anti-WCM}]
Given a set $S$ of weighted votes, the weights for a set of votes $T$, and a disliked candidate $d$, we are asked if there exists a way to cast the votes in $T$ so that it results in $d$ receiving 
the lowest score. 
\end{definition}

\begin{theorem} \label{antiwcm}
Top-truncated voting does not change the worst-case complexity of Anti-WCM for any scoring protocol.
\end{theorem}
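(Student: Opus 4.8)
The plan is to prove the statement by an exchange (ballot-rewriting) argument showing that, for Anti-WCM, the manipulators never gain anything by casting a \emph{proper} top-truncated vote instead of a complete vote in which the disliked candidate $d$ is ranked last. Since a complete vote is itself a top-truncated vote, this will show that the yes-instances of Anti-WCM coincide in the complete-vote and top-truncated settings, and hence that the two versions have the same worst-case complexity for every scoring protocol $X$. I would carry out the argument under the round-up scheme, which is the evaluation convention relevant to the subsequent scoring-elimination application and under which every unranked candidate receives the minimum score $\alpha_m$; the only structural fact the argument really uses is precisely this, so the same reasoning also covers round down (where ranking candidates only suppresses their scores), while the average scheme, in which unranked candidates can receive strictly more than $\alpha_m$, would need to be handled separately.

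The core is a local rewriting of a single manipulator's ballot $O$: delete $d$ from $O$ if it appears, keep the remaining ranked candidates in their relative order at the top, append all previously-unranked non-$d$ candidates in the middle in an arbitrary order, and place $d$ in the last position. I would then check the two monotonicity properties that make this rewriting safe. First, $d$'s contributed score does not increase, since it moves to exactly $\alpha_m$, whereas before it was $\alpha_m$ (if unranked) or some $\alpha_i \geq \alpha_m$ (if ranked). Second, every candidate $c \neq d$ has its contributed score weakly increase: a previously-ranked $c$ can only move to an earlier position once $d$ is removed, so its positional score goes from $\alpha_i$ to some $\alpha_{i'}$ with $i' \leq i$, and a previously-unranked $c$ moves from $\alpha_m$ to some $\alpha_q$ with $q \leq m-1$.

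Applying this rewriting independently to every manipulator's ballot, the aggregate effect is that $d$'s total score weakly decreases while every other candidate's total weakly increases, and the fixed contribution of the non-manipulator set $S$ is untouched. Hence each relation $\mathrm{score}(d) \leq \mathrm{score}(c)$ (and its strict form) is preserved for all $c \neq d$, so if the truncated profile made $d$ a lowest-scoring candidate, so does the rewritten complete profile; conversely, complete ballots are always available in the top-truncated setting, so the manipulators can match any complete-vote strategy. This establishes that the manipulators can force $d$ to the bottom with top-truncated votes if and only if they can with complete votes. From the equivalence, both complexity directions follow: the pure complete-vote problem is a syntactic special case, so any hardness (e.g.\ the non-veto NP-hardness underlying the later reduction) transfers upward, while membership transfers downward because the base scores from $S$ are computed in polynomial time irrespective of whether $S$ is truncated, after which the manipulators' assignment core is identical to the complete-vote one. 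The step I expect to be the main obstacle is the bookkeeping of the positional-score shifts, i.e.\ verifying that the two monotonicity inequalities hold \emph{simultaneously} under the chosen scheme; a secondary point, cleanly disposed of by the remark above, is that the ballots in $S$ may themselves be truncated.
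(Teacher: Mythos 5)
Your proposal is correct and follows essentially the same route as the paper: an exchange argument that completes each manipulator's truncated ballot so that $d$'s contributed score does not increase while every other candidate's does not decrease, combined with the observation that complete votes are themselves a special case of top-truncated ones. The only cosmetic difference is that the paper keeps $d$ in its ranked position when it appears (merely appending the missing candidates afterward) rather than always demoting $d$ to last place, and it leaves the evaluation-scheme bookkeeping and the possibility of a truncated $S$ implicit, both of which you treat more explicitly.
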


As a result of the above theorem, we have the following corollary which says that for any scoring rule not isomorphic to veto, Anti-WCM with top-truncated votes is NP-complete. Note that the 
corollary here is based on the result of Coleman and Teague who proved that Anti-WCM is NP-complete for all scoring rules not isomorphic to veto \cite{coleman}.

\begin{corollary} \label{corollaryAntiWCM}
For any scoring rule with $\alpha = \langle \alpha_1, \cdots, \allowbreak \alpha_m \rangle$, Anti-WCM with top-truncated votes is in P if $\alpha_1 = \allowbreak \cdots = \alpha_{m-1}$ and is 
NP-complete otherwise.
\end{corollary}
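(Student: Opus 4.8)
The plan is to derive the corollary by combining Theorem~\ref{antiwcm} with the complete-vote dichotomy of Coleman and Teague, after first checking that the algebraic condition $\alpha_1 = \cdots = \alpha_{m-1}$ describes exactly the ``isomorphic to veto'' boundary of their result. So the work is almost entirely bookkeeping: there is no new reduction to construct, since Theorem~\ref{antiwcm} already supplies the bridge between the two voting models.

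First I would verify the equivalence of the two descriptions of the easy case. A scoring vector with $\alpha_1 = \cdots = \alpha_{m-1} = a$ and $\alpha_m = b$ (where $a \geq b$) is, when $a > b$, carried to $\langle 1, \ldots, 1, 0 \rangle$ by the positive affine transformation $x \mapsto (x-b)/(a-b)$; since scoring rules are invariant under positive affine transformations of their scoring vectors, such a rule is isomorphic to veto. When $a = b$ the rule is the constant rule under which all candidates always tie and Anti-WCM is trivial. Conversely, veto itself satisfies $\alpha_1 = \cdots = \alpha_{m-1}$. Hence ``$\alpha_1 = \cdots = \alpha_{m-1}$'' and ``isomorphic to veto'' pick out the same family of rules, which is precisely the family on which Coleman and Teague place complete-vote Anti-WCM in $P$, with NP-completeness on the complement.

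Next I would transfer each half of the dichotomy to the top-truncated setting via Theorem~\ref{antiwcm}, which asserts that top-truncation leaves the worst-case complexity of Anti-WCM unchanged for every scoring rule. For the easy side, when the rule is isomorphic to veto the complete-vote problem is in $P$, and the theorem carries this to the top-truncated problem. For the hard side, I would first note membership in NP: a guessed assignment of top-truncated votes to the manipulators is verified in polynomial time by computing all scores and checking that $d$ attains the minimum. NP-hardness then follows from Theorem~\ref{antiwcm} together with Coleman and Teague's hardness for every rule not isomorphic to veto, giving NP-completeness.

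The one point that must be handled carefully is the \emph{direction} of the transfer in Theorem~\ref{antiwcm}. Because complete votes are a special case of top-truncated votes, a ``yes'' instance of the complete-vote problem remains ``yes'', but a priori a ``no'' instance could become ``yes'' once the manipulators are permitted to truncate; it is precisely the content of Theorem~\ref{antiwcm} that truncation never helps the manipulators in Anti-WCM, and this is what lets the NP-hardness of the complete-vote case descend to the top-truncated case rather than only the reverse inclusion holding. Given that theorem and the characterization above, no further casework is needed and the corollary is immediate.
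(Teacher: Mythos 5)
Your proposal is correct and follows essentially the same route as the paper: the corollary is obtained by combining Theorem~\ref{antiwcm} with Coleman and Teague's complete-vote dichotomy for Anti-WCM, after identifying the condition $\alpha_1 = \cdots = \alpha_{m-1}$ with ``isomorphic to veto.'' Your explicit attention to the direction of the transfer (that Theorem~\ref{antiwcm} rules out truncation turning a ``no'' instance into a ``yes'' instance, which is what lets hardness descend) is exactly the point the paper leaves implicit.
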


\begin{theorem} \label{eliminateNPC}
 For any scoring rule $X$ that is not isomorphic to veto, CWCM with top-truncated votes in eliminate($X$) is NP-complete.
\end{theorem}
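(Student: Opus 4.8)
The plan is to establish the statement in the standard two parts---membership in NP and NP-hardness---where the hardness half is obtained by marrying Corollary~\ref{corollaryAntiWCM} to the elimination-rule reduction of Coleman and Teague~\cite{coleman}. Membership in NP is routine: a certificate is the list of top-truncated orders cast by the manipulators in $T$, and given these together with the non-manipulator votes $S$ we can run eliminate($X$) to completion in polynomial time (at most $m$ rounds, each of which recomputes the round-up scores of the surviving candidates and discards any vote all of whose listed candidates have already been eliminated) and then check whether $p$ is among the surviving winners.

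For hardness I would reduce from Anti-WCM with top-truncated votes, which by Corollary~\ref{corollaryAntiWCM} is NP-complete for every scoring rule $X$ not isomorphic to veto. The guiding idea, following~\cite{coleman}, is that in an elimination rule the whole outcome can be made to hinge on who is discarded in the very first round, and ``forcing the disliked candidate $d$ to be eliminated first'' is precisely ``forcing $d$ to the lowest first-round score,'' i.e. an Anti-WCM task. Concretely, given an Anti-WCM instance with candidate set $C$, disliked candidate $d$, non-manipulator votes $S$ and manipulator weights $T$, I would construct an eliminate($X$) instance on $C$ (possibly augmented by a few auxiliary candidates) whose first-round round-up scores coincide with those of the Anti-WCM instance, and pad the remaining ballots so that: (i) if $d$ receives the lowest first-round score and is eliminated, then $p$ beats every surviving candidate in all subsequent rounds; and (ii) if $d$ survives the first round, $p$ can no longer be made a winner. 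Then the manipulators can make $p$ win eliminate($X$) exactly when they can drive $d$ to the lowest first-round score, i.e. exactly when the Anti-WCM instance is a yes-instance, and the reduction is clearly polynomial.

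The step I expect to be the main obstacle is verifying that this reduction stays faithful under the top-truncated evaluation semantics for elimination rules, rather than merely re-quoting the complete-vote argument of~\cite{coleman}. Two points need care. First, the convention that a vote is ignored once all of its listed candidates have been eliminated can perturb the scores in rounds $2,3,\dots$, so the padding that secures $p$'s subsequent victory in case (i) must be built from ballots whose listed candidates persist through the relevant rounds, and one must check that no auxiliary candidate is accidentally eliminated before $d$ or survives to threaten $p$. Second, since we work in the non-unique winner model, the first-round (and later-round) ties must be handled so that $d$'s elimination is triggered precisely on the Anti-WCM threshold; this is exactly the regime where Theorem~\ref{antiwcm} guarantees that top-truncation does not change the Anti-WCM outcome, so the first-round analysis transfers verbatim and only the deterministic ``tail'' of the election needs to be re-checked against the vote-ignoring rule. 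Handling the veto-isomorphism boundary correctly---ensuring the construction genuinely requires $X$ to be non-veto, matching the hypothesis of Corollary~\ref{corollaryAntiWCM}---is the final detail to confirm.
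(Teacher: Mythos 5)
Your proposal follows essentially the same route as the paper: invoke Corollary~\ref{corollaryAntiWCM} for NP-completeness of Anti-WCM with top-truncated votes and then apply the elimination-rule reduction of Coleman and Teague~\cite{coleman}. The paper's proof is in fact just this two-line citation, so your additional care about the vote-ignoring semantics in later rounds is a reasonable elaboration rather than a divergence.
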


\begin{proof}
 From Corollary \ref{corollaryAntiWCM} we know that Anti-WCM with top-truncated votes is NP-complete for any scoring rule that is not isomorphic to veto. Therefore we can use the exact same technique 
as in \cite[Theorem 13]{coleman} where they use a reduction from an arbitrary instance of Anti-WCM to prove the above result for the case of complete votes. \qed
\end{proof}

Since the plurality with runoff rule is the same as STV when there are only three candidates, we have the following corollary.

\begin{corollary}
 For the 3-candidate plurality with runoff rule, CWCM with top-truncated votes is NP-complete. 
\end{corollary}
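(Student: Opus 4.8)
The plan is to reduce the statement to the general result already established for scoring elimination rules, namely Theorem~\ref{eliminateNPC}, by exploiting the fact that on three candidates the plurality with runoff rule computes exactly the same winner set as STV, i.e.\ eliminate(plurality). Since plurality is a positional scoring rule with $\alpha = \langle 1, 0, 0 \rangle$, which is not isomorphic to veto (veto being $\langle 1, 1, 0 \rangle$ on three candidates), Theorem~\ref{eliminateNPC} already tells us that CWCM with top-truncated votes in eliminate(plurality) is NP-complete. Thus it would suffice to argue that this hardness transfers verbatim to the 3-candidate plurality with runoff rule.

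The key step is to verify the claimed equivalence of the two rules on three candidates, and crucially that it survives the presence of top-truncated ballots. First I would observe that with three candidates both rules begin by discarding the single candidate with the fewest first-place votes: for plurality with runoff this is the elimination of all but the top two, and for eliminate(plurality) this is the removal of the candidate placed last by plurality. After this first elimination exactly two candidates remain, and on two candidates plurality coincides with the majority rule, so the second-round majority comparison of plurality with runoff is identical to the second plurality elimination in STV. Next I would check that the two rules treat top orders identically: by the conventions fixed in the preliminaries, plurality with runoff handles top-truncated votes exactly as scoring elimination rules do, i.e.\ a vote is counted for its highest-ranked surviving candidate and is ignored once all its listed candidates have been eliminated, so the vote-transfer behaviour between the two rounds agrees as well. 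Together these show the two rules induce identical winner sets on every 3-candidate profile of (possibly top-truncated) weighted votes, whence a coalition can make $p$ a winner under one rule if and only if it can under the other, and the instance built in the proof of Theorem~\ref{eliminateNPC} is directly a valid hardness instance for plurality with runoff.

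Membership in NP is immediate, since for a fixed number of candidates one can guess the manipulators' votes and evaluate the rule in polynomial time. The part I expect to require the most care is the equivalence argument, and specifically the tie-handling in the elimination/selection steps: one must ensure that ``eliminate the last-placed candidate'' and ``keep the top two'' resolve the same way when there is a tie for fewest first-place votes. The cleanest route around this is to note that the instance produced by Theorem~\ref{eliminateNPC} need not trigger such a tie, so the two rules provably agree on it regardless of how ties are broken, which sidesteps any discrepancy between the two tie-breaking conventions.
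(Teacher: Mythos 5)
Your proposal is correct and follows the paper's own (one-line) argument exactly: the paper also derives the corollary from Theorem~\ref{eliminateNPC} by noting that 3-candidate plurality with runoff coincides with STV, i.e.\ eliminate(plurality), which is not isomorphic to veto. Your additional care about the top-truncation convention and tie-handling is a sound elaboration of the same route rather than a different one.
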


\subsection{Copeland$^{\alpha}$}
Narodytska and Walsh showed that CWCM with top-truncated votes in the Copeland rule (Copeland$^{0.5}$) is NP-complete for four candidates \cite{nina}. Additionally, they also conjectured that the 
result holds when the number of candidates is three. Here we prove that conjecture, and also show that our hardness result holds for all rational $\alpha \in [0, 1)$. We note that the following 
result has also been independently obtained by Fitzsimmons and Hemaspaandra in a very recent paper (obtained through personal communication) \cite{fitzsimmons}. 

\begin{theorem} \label{copeland-alpha}
 Let $\alpha$ be a rational number with $0 \leq \alpha < 1$. For Copeland$^{\alpha}$, CWCM with top-truncated votes is NP-complete for three candidates. 
\end{theorem}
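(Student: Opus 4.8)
The plan is to prove NP-completeness in the usual two steps. Membership in NP is immediate: a manipulation is a choice of vote for each member of $T$, and for three candidates the Copeland$^{\alpha}$ winner can be computed in polynomial time, so a guessed profile can be verified in polynomial time to check whether it makes $p$ a winner. For hardness I would reduce from Fixed-Difference Subset Sum (NP-complete by Theorem~\ref{FDSS}), which is the natural source here. With only three candidates $\{p,a,b\}$, any manipulator trying to help $p$ should rank $p$ first, after which its only remaining freedom is the $a$-versus-$b$ comparison, and top-truncation offers exactly three options: rank $a \succ b$ (vote $(p,a,b)$), rank $b \succ a$ (vote $(p,b,a)$), or be indifferent (the truncated vote $(p)$). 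A manipulator of weight $k_i$ therefore contributes $+k_i$, $-k_i$, or $0$ to the net $a$-over-$b$ margin, which mirrors exactly the two disjoint subsets and the discarded elements of a Fixed-Difference Subset Sum instance.

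Concretely, given an instance $\{k_i\}$ summing to $2K$, I would create one manipulator of weight $k_i$ for each $i$ (so the total manipulator weight is $W = 2K$) and fix the non-manipulator votes so that, before the manipulators vote, $p$ trails each of $a$ and $b$ by exactly $W$ in their pairwise contests while $a$ trails $b$ by exactly $K$; realizing three prescribed pairwise margins using (possibly truncated) weighted non-manipulator votes over three candidates is routine. The key structural consequences are: (i) since the manipulators' total weight is only $W$, they can at best \emph{tie} $p$ with each of $a$ and $b$, and they tie both simultaneously if and only if every manipulator ranks $p$ strictly first; and (ii) using $\alpha < 1$ (so a pairwise win is worth strictly more than a tie), if $p$ ties both then $p$'s score is $2\alpha$, whereas any opponent that wins its remaining contest scores $1+\alpha > 2\alpha$, so $p$ is a winner precisely when neither $a$ nor $b$ beats the other, i.e. when the $a$-versus-$b$ contest is also a tie. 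Given that all manipulators rank $p$ first, the final $a$-over-$b$ margin is $-K + (\sum S_1 - \sum S_2)$, so $p$ wins iff $\sum S_1 - \sum S_2 = K$, which is exactly a yes-instance of Fixed-Difference Subset Sum.

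The step I expect to require the most care is soundness: verifying that there is no unintended way for the manipulators to make $p$ a winner, and in particular that top-truncation does not trivialize the instance. Two points carry this. First, because I pin $p$'s deficit against both $a$ and $b$ at exactly $W$, the manipulators can never make $p$ strictly beat either opponent, so $p$'s score lies in $\{0,\alpha,2\alpha\}$; this rules out the otherwise dangerous configuration in which some manipulator abandons $p$ and all three candidates end up tied at score $1$, since $p$ can never reach a score coming from an actual pairwise win. Second, because the required shift $K$ is strictly positive, the ``free'' option of every manipulator casting the indifferent vote $(p)$ contributes $0$ and by itself cannot balance the $a$-versus-$b$ contest, so a genuine Fixed-Difference Subset Sum solution is needed. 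Finally, I would emphasize that the hypothesis $\alpha < 1$ is essential: at $\alpha = 1$ a tie and a win are worth the same, so $p$ at $2\alpha = 2$ would co-win regardless of the $a$-versus-$b$ outcome and the encoded computation disappears.
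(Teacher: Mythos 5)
Your proposal is correct and follows essentially the same route as the paper's proof: a reduction from Fixed-Difference Subset Sum in which the non-manipulators pin $p$'s pairwise deficit against each of $a$ and $b$ at exactly the total manipulator weight (forcing every manipulator to rank $p$ first and $p$ to score exactly $2\alpha$), so that $p$ wins iff the $a$-versus-$b$ contest is tied, which encodes $\sum S_1 - \sum S_2 = K$; the paper's concrete instance ($3K$ voting $(a,b,p)$, $K$ voting $(b,a,p)$, manipulator weights $2k_i$) is your construction up to relabeling and scaling. The one detail to nail down is that realizing your margins ($p$ down by $2K$ to both, $a$ down by $K$ to $b$) with integer weights requires weights $K/2$ and $3K/2$ on the two $p$-last votes, so you should double all weights as the paper does.
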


\subsection{Maximin}
Although we have seen instances like in scoring rules with the round-up evaluation scheme (Theorem \ref{SC-RU}) were top-truncated voting decreases the complexity of manipulation (as compared to the 
complete votes case), they aren't really conclusive in the sense that one could question the choice of the evaluation scheme that in turn caused the result. Clearly, the round-up evaluation scheme 
isn't a good choice as what we're essentially doing by employing the same is to encourage the manipulators to treat it just as a plurality-type rule. Therefore, one question that could arise is: ``Is 
there a voting system for which no matter how the top-truncated votes are dealt with the complexity of manipulation with top-truncated ballots decreases?''. We answer this question below in the 
affirmative. We see that as long as all the unranked candidates are considered tied and are assumed to be ranked below the ranked candidates (which is the natural definition of a top-truncated 
vote), the complexity of manipulation with top-truncated ballots is in $P$ for the maximin rule. Note that CWCM for the maximin rule is known to be NP-complete for four candidates when we consider 
only complete votes \cite[Theorem 8]{conitzer}\footnotemark.

\footnotetext{Although Conitzer et al. uses the unique winner model \cite{conitzer}, it can be verified that the result holds for non-unique winner model as well.}

\begin{theorem} \label{maxminThm}
 Computing if a coalition of manipulators can manipulate the maximin protocol with weighted top-truncated votes takes polynomial time (for any number of candidates). 
\end{theorem}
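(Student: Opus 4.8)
The plan is to show that the manipulators have a single dominant strategy: every manipulator casts the top-truncated vote consisting of $p$ alone. I would argue that if $p$ can be made a winner by \emph{any} casting of the manipulators' votes, then it is made a winner by this particular casting. The whole problem then reduces to computing the maximin scores once and checking whether $p$ is a winner, which is clearly polynomial (for any number of candidates).

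First I would fix notation: write $N_S(c_i, c_j)$ for the weighted number of non-manipulators in $S$ preferring $c_i$ to $c_j$, and note that any manipulator vote can only add weight to the pairwise tallies $N_P(\cdot,\cdot)$, never remove it. The key structural fact about top-truncated votes is that when a manipulator ranks only $p$, the candidate $p$ is placed above every other candidate while all candidates other than $p$ are mutually tied and hence ranked below $p$. Consequently such a vote contributes its full weight to $N_P(p, c_j)$ for every $j \neq p$, yet contributes nothing to any tally $N_P(c_i, c_j)$ with $c_i \neq p$ (either $c_i$ and $c_j$ are both unranked and tied, or $c_i$ is compared against $p$, whom the manipulator prefers).

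The two halves of the argument then run as follows. For $p$ itself, $s_p = \min_{j \neq p} N_P(p, c_j)$; since ranking $p$ alone places $p$ above every other candidate, it simultaneously maximizes every tally $N_P(p, c_j)$ over all possible manipulator votes, and hence maximizes $s_p$. For any other candidate $c_i \neq p$, the strategy leaves each tally $N_P(c_i, c_j)$ equal to its non-manipulator value $N_S(c_i, c_j)$, which is the smallest value that tally can take under any casting; therefore $s_{c_i} = \min_{j \neq i} N_S(c_i, c_j)$ is simultaneously minimized over all manipulator strategies. Putting these together, the ``$p$ alone'' casting maximizes $s_p$ and minimizes $s_{c_i}$ for every $c_i \neq p$ at once, so it is the most favorable casting possible for $p$: if it fails to make $p$ a winner, no casting can.

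The algorithm is then to have all manipulators vote $(p)$, compute the resulting maximin scores, and accept iff $s_p \geq s_{c_i}$ for all $i$. I do not anticipate a genuine obstacle. The reason the truncation semantics help here — in contrast to the scoring rules of Theorem~\ref{avThm}, where a single manipulator vote is forced to rank the non-$p$ candidates in \emph{some} order and thereby necessarily boosts one rival — is that the pairwise nature of maximin together with the ``tied below'' convention lets a manipulator support $p$ against everyone without lending any pairwise support to a rival. The only point requiring care is verifying that a truncated $(p)$ vote indeed adds nothing to $N_P(c_i, c_j)$ for $c_i, c_j \neq p$, which is immediate from the convention that unranked candidates are mutually tied.
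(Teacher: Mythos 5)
Your proposal is correct and follows essentially the same route as the paper: the paper also shows that replacing every manipulator vote by the truncated ballot $(p)$ preserves (indeed, is optimal for) $p$'s maximin score while never increasing any rival's score, so checking that single casting suffices. Your phrasing of this as a dominance argument (the $(p)$ vote simultaneously maximizes every $N_P(p,c_j)$ and contributes nothing to any $N_P(c_i,c_j)$ with $c_i \neq p$) is just a slightly more explicit rendering of the same idea.
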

To prove this, we show that any constructive manipulation achieved for $p$ can be achieved if all manipulators just vote $(p)$. The detailed proof can be found in the appendix.

\section{Destructive Manipulation}
In this section we look at the complexity of destructive manipulation when top-truncated ballots are allowed. We begin by looking at a broad class of rules for which top-truncated voting has no 
impact on strategic voting. This class consists of all voting rules where the candidates are assigned numerical scores based on the votes and are monotone, meaning that if a voter $v_i$ changes his 
or 
her vote (from $\succ$ to $\succ'$) in such a way that $\{b : a \succ b \} \subseteq  \{b : a \succ' b\}$, then a's score will not decrease (or informally, more support for a candidate will not 
decrease it's score). Note that although elimination versions of scoring rules like STV and the Baldwin's rule are based on numerical scores, they are not monotone and hence aren't part of the class 
of voting rules we consider in the theorem below. 

\begin{theorem} \label{dwcmThm1}
 For any voting rule that is monotone and is based on numerical scores, any destructive manipulation that can be achieved by casting top-truncated votes can be achieved if only complete votes were 
allowed.
\end{theorem}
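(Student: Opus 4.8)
The plan is to show that any successful destructive manipulation that uses top-truncated manipulator ballots can be converted, one ballot at a time, into one that uses only complete ballots, without ever raising the score of the disliked candidate $h$. Throughout, for a single ballot $\succ_v$ write $B_v(a) = \{b : a \succ_v b\}$ for the set of candidates that $a$ defeats on that ballot. Since the winner set consists of the candidates of maximum score, in the standard (non-unique-winner) model $h$ fails to win exactly when some candidate has score strictly greater than $h$. Note also that only the manipulators' ballots are altered in the two scenarios; the non-manipulator contributions to every candidate's total score stay fixed.

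First I would fix a profile in which the manipulators cast top-truncated ballots that, together with the given non-manipulator votes, achieve the destructive goal, and let $c^*$ be a candidate whose score strictly exceeds that of $h$. I then define a completion of a single top-truncated ballot as follows: keep the ranked candidates in their given positions at the top, and extend to a total order by listing the unranked candidates below them in any order, subject to the single stipulation that if $h$ is itself unranked then $h$ is placed last.

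The crux is to track how this completion changes the sets $B_v(\cdot)$. Completing a ballot only breaks ties among the previously-unranked bottom candidates, so every ranked candidate's set $B_v$ is left exactly unchanged, while every unranked candidate's set $B_v$ can only grow, since it gains precisely those candidates placed below it. For the disliked candidate $h$ this means $B_v(h)$ is unchanged on every ballot: if $h$ was ranked it keeps its position, and if $h$ was unranked it is pushed to the very bottom so that $B_v(h)$ remains empty. Applying the monotonicity hypothesis to the manipulators one at a time, each such single-ballot change satisfies the per-voter subset condition, so $h$'s total score never increases (in fact it is pinned exactly, since $B_v(h)$ is unchanged in both directions), whereas $B_v(c^*)$ weakly grows on every ballot and hence $c^*$'s total score never decreases. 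Chaining the resulting inequalities gives $\mathrm{score}_{\mathrm{compl}}(c^*) \geq \mathrm{score}_{\mathrm{top}}(c^*) > \mathrm{score}_{\mathrm{top}}(h) \geq \mathrm{score}_{\mathrm{compl}}(h)$, so $h$ still fails to win under the completed profile.

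I expect the main difficulty to lie in the bookkeeping rather than in any single clever idea: one must check carefully that completion never enlarges $B_v(h)$ -- this is what forces the ``$h$-last'' stipulation and a separate treatment of the cases where $h$ is ranked versus unranked -- and one must apply the monotonicity axiom correctly, noting that it is stated for a single voter's deviation and must therefore be invoked by altering the manipulators' ballots one at a time so that the subset hypothesis $B_v(a) \subseteq B'_v(a)$ holds at each individual step; the ``pinned exactly'' claim for $h$ in particular requires applying the axiom in both directions, which is legitimate precisely because $B_v(h)$ is left unchanged.
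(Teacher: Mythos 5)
Your proposal is correct and uses essentially the same argument as the paper: identify a candidate $c^*$ that beats $h$, convert the manipulators' ballots to complete ones, and invoke monotonicity to show $c^*$'s score cannot drop while $h$'s cannot rise. The only (immaterial) difference is that the paper replaces each manipulator ballot wholesale by one with $c^*$ first and $h$ last, whereas you complete each ballot in place with $h$ forced to the bottom when unranked; both satisfy the required subset conditions.
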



Since top-truncated voting has no impact on destructive manipulation in rules that are monotone and are based on numerical scores, it follows that for bounded number of candidates we can use the 
result by Conitzer et al. who showed that DWCM was in $P$ for all of them when only complete votes are allowed \cite{conitzer}.

\begin{corollary}
 DWCM with top-truncated votes is in $P$ for all scoring rules, for the maximin rule, and for Copeland$^{\alpha}$.
\end{corollary}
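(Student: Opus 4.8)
The plan is to start from an arbitrary successful top-truncated manipulation and convert it, one manipulator ballot at a time, into a complete-vote manipulation that is still successful, using monotonicity as the driving tool. Since we work in the non-unique winner model, the fact that $h$ does not win means there is some candidate $c \neq h$ whose score strictly exceeds $h$'s score under the manipulators' top-truncated ballots; I would fix this witness $c$ at the outset.

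Next I would specify how to complete each manipulator's ballot. A top-truncated ballot consists of a fixed ranked block sitting above a block of mutually tied, unranked candidates, and the only freedom is the order imposed on the unranked block. I would complete every manipulator ballot by seating $c$ at the top of its unranked block (when $c$ is unranked there) and $h$ at the bottom (when $h$ is unranked there), filling in the rest arbitrarily. The point is that this can always be done simultaneously, since $c$ can be placed above $h$ within the unranked block without conflict.

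I would then check the two beat-set containments that feed monotonicity, case by case. For $c$: if $c$ is ranked, its beat-set $\{b : c \succ b\}$ is unchanged by completion; if $c$ is unranked, then before completion it strictly beats none of its fellow unranked candidates, whereas after completion, placed atop the unranked block, it beats all of them, so $\{b : c \succ b\} \subseteq \{b : c \succ' b\}$. For $h$: if $h$ is ranked, its beat-set is unchanged; if $h$ is unranked and placed at the bottom, it strictly beats nobody both before and after, so $\{b : h \succ' b\} \subseteq \{b : h \succ b\}$. Thus completion never shrinks $c$'s beat-set and never enlarges $h$'s.

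Finally I would run the monotonicity telescope. Completing the ballots one at a time yields a chain of profiles $P_0, \dots, P_k$ with $P_0$ the original top-truncated profile and $P_k$ the fully completed one, where consecutive profiles differ in a single ballot. By the containment for $c$ together with monotonicity, $c$'s score is non-decreasing along the chain; reading monotonicity in reverse for $h$, $h$'s score is non-increasing. Hence $\mathrm{score}_{P_k}(c) \geq \mathrm{score}_{P_0}(c) > \mathrm{score}_{P_0}(h) \geq \mathrm{score}_{P_k}(h)$, so $c$ still strictly outscores $h$ and $h$ is not a winner under complete votes. The main obstacle is precisely that monotonicity is stated per-candidate for a single ballot change while completing a ballot rearranges the whole unranked block at once; the care needed is to confirm that one completion simultaneously satisfies both containments and that monotonicity may be invoked independently for $c$ and for $h$ across the same one-ballot change.
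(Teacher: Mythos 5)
Your monotonicity argument is essentially the paper's own route: the corollary rests on Theorem~\ref{dwcmThm1}, which the paper proves by replacing every manipulator ballot with a complete vote that has a witness candidate on top and $h$ at the bottom and then invoking monotonicity for the witness and (in reverse) for $h$. Your version differs only cosmetically --- you complete the ballots rather than overwrite them, take as witness any $c$ that strictly outscores $h$ rather than the election winner, and telescope the change one ballot at a time --- and all of your containment checks are sound.

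However, what you have written proves only the conversion lemma, not the corollary as stated. The corollary asserts membership in $P$, and showing that a successful top-truncated destructive manipulation exists if and only if a successful complete-vote one exists does not by itself yield an algorithm. You still need two further steps that the paper supplies around the corollary: (i) verify that all positional scoring rules, maximin, and Copeland$^{\alpha}$ are indeed monotone rules based on numerical scores (this is why, e.g., STV is excluded), and (ii) invoke the result of Conitzer et al.\ that DWCM with complete votes is in $P$ for these rules for a bounded number of candidates --- concretely, for each candidate $c \neq h$ one checks whether all manipulators ranking $c$ first and $h$ last makes $c$ strictly outscore $h$. Without citing or reconstructing that polynomial-time procedure, the proposal stops one step short of the claimed conclusion.
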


Next, we consider the elimination versions of all 3-candidate scoring rules and we show how for all scoring rules that are not isomorphic to veto DWCM is NP-complete. We prove this by using a 
reduction from an arbitrary instance of Anti-WCM.

\begin{theorem} \label{eliminateNPCdest}
 For any 3-candidate positional scoring rule $X$ that is not isomorphic to veto, DWCM in eliminate(X) with top-truncated votes is NP-complete.
\end{theorem}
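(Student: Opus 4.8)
The plan is to show membership in NP (guess the manipulators' ballots and evaluate the two-round rule in polynomial time) and then establish NP-hardness by reducing from Anti-WCM, which Corollary~\ref{corollaryAntiWCM} tells us is NP-complete with top-truncated votes for every three-candidate scoring rule $X$ not isomorphic to veto. This is the same source problem used for the constructive result in Theorem~\ref{eliminateNPC}, so the work lies entirely in adapting the gadget to the destructive setting. Given an Anti-WCM instance over candidates $\{h,c,\ell\}$ with target $\ell$ (the candidate the manipulators wish to push to the bottom of the first-round $X$-scores), non-manipulator votes $S_0$, and manipulator weights $T$, I would build a DWCM instance for eliminate($X$) with disliked candidate $h$, the same weights $T$, and non-manipulator votes $S_0$ together with a block of heavy ``padding'' ballots.

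Recall that eliminate($X$) on three candidates runs in two rounds: round one eliminates the candidate with the lowest $X$-score, and round two is a pairwise-majority runoff between the two survivors, with the truncated-ballot convention that a ballot is ignored once all of its ranked candidates have been eliminated. The design goal is to arrange, through the padding, that (i) $h$ is never the lowest scorer in round one, so $h$ is never eliminated there; (ii) whenever $\ell$ is eliminated first, the runoff is $h$ versus $c$ and $c$ defeats $h$; and (iii) whenever $c$ is eliminated first, the runoff is $h$ versus $\ell$ and $h$ defeats $\ell$. Under (i)--(iii), $h$ loses exactly when $\ell$ is the round-one casualty, i.e.\ exactly when $\ell$ is the lowest scorer, which is precisely the Anti-WCM question (note that $\ell$ being eliminated already means $\ell$ is the overall minimum). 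Completeness is then immediate: the Anti-WCM-winning ballots, cast unchanged, eliminate $\ell$ and hand the runoff to $c$.

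The hard direction is soundness: I must argue that \emph{every} manipulator strategy making $h$ lose also makes $\ell$ the lowest scorer, ruling out unintended routes to $h$'s defeat (such as $h$ itself being eliminated, or $h$ reaching and surviving the ``wrong'' runoff). This is where the main obstacle lies, and it is genuinely delicate with only three candidates, because a candidate's positional score and its pairwise strength are tightly coupled: ranking $h$ highly to protect its first-round score simultaneously strengthens $h$ in the runoff. Conditions (i) and (ii) therefore pull against each other --- keeping $h$ safe in round one tends to make $h$ strong enough to win the $h$-versus-$c$ runoff, which would break (ii) --- and any padding that robustly \emph{orders} the scores of $c$ and $\ell$ would collapse the first-round decision to a single two-way comparison and trivialize the problem.

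The tool I would use to break this coupling is exactly the top-truncated elimination semantics: a single-candidate ballot $(x)$ contributes $\alpha_1$ to $x$'s first-round score but is silently discarded in any runoff from which $x$ has been eliminated, so it never transfers. This lets me inject first-round score mass (to keep $h$ out of last place and to shape who survives round one) without committing that support to a particular runoff, while separately using two-candidate ballots to fix the runoff majorities of (ii) and (iii). Concretely, I would split the padding into (a) single-candidate ballots pinning the first-round score gaps so that $h$ is safe while leaving the $c$-versus-$\ell$ comparison --- the only one the manipulators can still swing --- a faithful copy of the original Anti-WCM instance, and (b) heavy two-candidate ballots robustly enforcing $c \succ h$ and $h \succ \ell$ in their respective runoffs. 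The crux of the write-up will be choosing these two blocks so their first-round contributions do not themselves settle the $c$-versus-$\ell$ comparison, and then verifying by a case analysis over the manipulators' possible ballot types that no deviation lets $h$ lose except by driving $\ell$ to the bottom; I expect that case analysis to be the most technical step.
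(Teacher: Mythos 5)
Your high-level plan (reduce from Anti-WCM, pad with heavy ballots, exploit the discard semantics of truncated ballots) matches the paper's, but the specific architecture you chose does not work, and the proof is in any case left unfinished at exactly the point you yourself identify as the crux. The problem is your decision to make the Anti-WCM target a candidate $\ell \neq h$ and to have $h$ lose by surviving round one and then losing the runoff. If your padding really enforces (i) --- $h$ is never the round-one loser, no matter what the manipulators do --- then the round-one elimination is decided by a single score comparison between $c$ and $\ell$, which the manipulators can optimize greedily (each one independently casts the ballot minimizing $s(\ell)-s(c)$); the constructed DWCM instance is then polynomial-time solvable and cannot inherit hardness. Equivalently, ``$\ell$ is eliminated first'' only certifies $s(\ell) \le s(c)$ in the original instance, whereas Anti-WCM demands that $\ell$ score below \emph{both} $c$ and $h$, so your backward direction does not recover an Anti-WCM solution. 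Your proposed escape --- injecting round-one score via single-candidate ballots that ``never transfer'' --- also fails where you need it: a ballot $(h)$ is discarded only in a runoff from which $h$ has been eliminated, but under (i) $h$ is never eliminated, so in the $h$-versus-$c$ runoff that ballot counts fully as $h \succ c$. The coupling between $h$'s round-one safety and $h$'s runoff strength, which you correctly flag as the obstacle, is therefore not broken by your tool.

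The paper avoids all of this by taking the disliked candidate of the Anti-WCM instance to \emph{be} $h$ and making $h$ lose by first-round elimination. Its padding $S'$ (weights that are multiples of a $K$ exceeding the total weight of $S \cup T$) ties all three candidates' round-one scores exactly, so $h$ is eliminated first iff $h$ has the lowest score in the original Anti-WCM instance; and if instead $a$ or $b$ is eliminated first, the surviving padding ballots give $h$ a runoff margin of $K$ over the other survivor, so $h$ wins and the destructive manipulation fails. The single-candidate ballots $(a)$ and $(b)$ in the padding play the role you wanted yours to play, but in the direction that actually works: they preserve the round-one tie while being discarded precisely when their candidate is gone, leaving $h$ with an insurmountable runoff lead. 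I would redo your reduction along these lines; as written, both the construction and the ``most technical step'' of the soundness argument are missing, and the blueprint they would have to fill in is not sound.
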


Again, since the plurality with runoff rule is equivalent to STV when there are only three candidates, we have the following corollary.
\begin{corollary}
 For the  3-candidate plurality with runoff rule, DWCM with top-truncated votes is NP-complete. 
\end{corollary}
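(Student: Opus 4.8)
The plan is to prove membership in NP and then establish NP-hardness by a reduction from Anti-WCM with top-truncated votes, which is NP-complete for every $3$-candidate scoring rule not isomorphic to veto by Corollary \ref{corollaryAntiWCM}. Membership is routine: a certificate is an assignment of top-truncated ballots to the voters in $T$, and since eliminate($X$) on three candidates runs in polynomial time, one verifies in polynomial time that $h$ is not a winner. The entire difficulty is the hardness direction, and the guiding observation is that on three candidates eliminate($X$) proceeds in exactly two stages, a round-1 scoring elimination under $X$ followed by a pairwise majority runoff between the two survivors. Hence if the pairwise (runoff) behaviour can be frozen, the outcome is decided entirely by which candidate receives the lowest $X$-score in round~1, and this is precisely the quantity that Anti-WCM lets the manipulators control.

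Concretely, starting from an arbitrary Anti-WCM instance on candidates $\{d,e,f\}$ with disliked candidate $d$ and manipulator weights $T$, I would build an eliminate($X$) DWCM instance on the same three candidates, copy the weights $T$, and add a bounded block of large-weight padding voters to the non-manipulator set whose total pairwise influence exceeds the total weight of $T$. The padding is chosen so that (i) the three pairwise majority contests are locked into a fixed cyclic orientation robust against every choice of the manipulators, and (ii) one distinguished candidate --- taken to be the disliked candidate $h$ --- always survives round~1, i.e. its $X$-score can never be the smallest, so that round~1 is a contest only between $d$ and the third candidate. I would orient the locked cycle so that $h$ beats $d$ but loses to the third candidate in a runoff; then $h$ survives to the runoff either way, and $h$ fails to win precisely when $d$ is eliminated in round~1. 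The manipulators' only remaining freedom is to push $d$'s round-1 score below that of the third candidate, and this residual score-control problem is set up to coincide with the original task of making $d$ the lowest-scoring candidate. One then argues $h$ is not a winner if and only if $d$ is the round-1 loser if and only if the manipulators can drive $d$'s score to the bottom, i.e. if and only if the Anti-WCM instance is a yes-instance; this is the destructive dual of the reduction used for the constructive case in Theorem \ref{eliminateNPC}.

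The step I expect to be the main obstacle is reconciling requirements (i) and (ii) with the round-1 scoring problem for scoring vectors near Borda. After normalisation a $3$-candidate rule is $\langle 1,\lambda,0\rangle$ with $0 \le \lambda < 1$, and for $\lambda$ close to and equal to $\tfrac12$ a candidate's $X$-score is tightly coupled to its pairwise performance, so the padding that locks the cycle and pins down the surviving candidate $h$ unavoidably perturbs the round-1 scores of $d$ and the third candidate. The delicate part is to choose the padding so that this perturbation is a computable, uniform additive shift that leaves the ``make $d$ lowest'' sub-problem an exact image of the given Anti-WCM instance (in particular, so that collapsing ``lowest among all three'' to the single binding comparison against the third candidate does not change the answer), and to check that one construction achieves this simultaneously for every $\lambda \in [0,1)$. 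Throughout I would handle ties and the non-unique-winner model exactly as in the proof of Theorem \ref{antiwcm}, so that ``lowest score'' translates correctly into ``eliminated in round~1'' and the equivalence between the two instances is tight.
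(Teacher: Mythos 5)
You are working much harder than necessary, and the extra work opens a gap. The paper obtains this corollary in one line: for three candidates, plurality with runoff \emph{is} the rule eliminate(plurality) (i.e.\ 3-candidate STV), and plurality is not isomorphic to veto, so the statement is an immediate instance of Theorem~\ref{eliminateNPCdest}. Your proposal instead re-derives a hardness proof for eliminate($X$) from scratch via Anti-WCM, worrying along the way about all normalised scoring vectors $\langle 1,\lambda,0\rangle$ including $\lambda$ near $\tfrac{1}{2}$; none of that is relevant here, since only $X=$ plurality ($\lambda=0$) matters for this corollary. (Your NP-membership argument is fine.)

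More importantly, the reduction you sketch does not close, and the unresolved step is exactly the one you flag as ``the main obstacle.'' You pad so that the DWCM-disliked candidate $h$ can never have the lowest round-1 score, which collapses the elimination decision to the single comparison ``is $d$'s score below the third candidate's?''. That is strictly weaker than the Anti-WCM question ``is $d$'s score the lowest of all three?'': a successful destructive manipulation in your padded instance certifies only that $d$ scores below $e$, and says nothing about $d$ versus the candidate playing the role of $h$ in the original, unpadded votes, so the backward direction of your equivalence fails. The paper's proof of Theorem~\ref{eliminateNPCdest} avoids this by the opposite design: the padding block $S'$ gives all three candidates \emph{exactly equal} scores (each receives $4\alpha_1 K + 2\alpha_2 K$), so the lowest-scoring candidate after padding is precisely the lowest-scoring candidate of the Anti-WCM instance; and the pairwise margins of size $K$ built into $S'$, which exceed the total weight of $S\cup T$, guarantee that $h$ wins the runoff whenever $h$ survives round~1. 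Hence $h$ fails to win if and only if $h$ itself is eliminated first, if and only if the Anti-WCM instance is a yes-instance. If you insist on a direct proof rather than the one-line citation, you should adopt that ``keep all three candidates tied in the padding'' device instead of pinning one candidate's score high.
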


\section{Impact on Complexity of Manipulation when there is Uncertainty about the Non-manipulators' Votes} \label{uncertaintyNM}
So far, we have looked at the complexity of manipulation with top-truncated ballots when the manipulators have complete information on the non-manipulators' votes. Although a useful setting to study 
given that it enables us to look at the hardness of manipulation without having to worry about the complexities that are introduced as part of the uncertainty model, the assumption that the 
manipulators will have complete information isn't always realistic. Therefore, we now look at how incomplete information about the non-manipulators impact the complexity of manipulation 
with top-truncated ballots. To the best of our knowledge, we are the first to look at complexity of manipulation with top-truncated ballots when there is uncertainty about the others' votes. 

To model the incomplete information setting, we consider the following two scenarios and we study each of them separately. Note that the second setting has been studied by Conitzer et al. for 
the case of complete votes \cite{conitzer}.
\begin{enumerate}
 \item What if only top-truncated preference orderings of the non-manipulators were visible to the manipulators?
 \item What if the manipulators have only probabilistic information on the votes of the non-manipulators?
\end{enumerate}

\subsection{When only top orders of the non-manipulators are visible} \label{TT-NM}
Before we look at the problem of manipulation, let us introduce two other problems: i) the problem of evaluating a candidate's winning probability when there's uncertainty about the votes and ii) the 
weighted version of the extension-bribery problem which in turn was introduced by Baumeister et al. \cite{baumeister}. As has been the case throughout this paper, we only consider elections in  which 
the voters are weighted. 

\begin{definition}[Evaluation under Top-truncated Uncertainty]
We are given a set $S$ which is a partially-revealed top-truncated set of votes of a certain set $S'$ (which in turn may contain complete or top-truncated ballots themselves), the weights of each of 
the voters, a candidate $p$, and a number $r \in [0, 1]$. We are asked if the probability of $p$ winning in the original election (where $S'$ is the set of votes cast) is greater than 
$r$.  
\end{definition}

\begin{definition}[Weighted Extension-bribery]
We are given a set $S$ of votes which are possibly top-truncated, the weights of each of the voters in $V$, a collection $\Delta = (\delta_1, \cdots, \delta_n)$ of extension-bribery cost functions, a 
preferred candidate $p$, and a budget $B$. We are asked if there exists an extension to the votes in $S$ with cost $\leq B$ such that $p$ is the winner.  
\end{definition}

Now, we show that the Evaluation under Top-truncated Uncertainty problem with $r = 0$ (henceforth also referred to as the Evaluation problem) is equivalent to a special case of the Weighted 
Extension-bribery problem namely, Weighted Extension-bribery with zero costs (i.e. when $\delta_i = 0, \: 1 \leq i \leq n $). 

\begin{theorem} \label{extbribEval}
 For a given voting protocol, Weighted Extension-bribery with zero costs is NP-hard if and only if the Evaluation under Top-truncated Uncertainty problem (with $r = 0$) is 
NP-hard.
\end{theorem}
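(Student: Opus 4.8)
The plan is to show that, once the trivial parameters are fixed, both problems reduce to exactly the same decision question: given a set $S$ of (possibly top-truncated) weighted votes and a preferred candidate $p$, does there exist a consistent extension of the ballots in $S$ under which $p$ is a winner? For Weighted Extension-bribery with zero costs this is immediate, since every extension has cost $0 \leq B$, so the budget constraint is vacuous and the instance is a YES-instance precisely when some extension makes $p$ win. For the Evaluation problem with $r = 0$, the condition ``probability of $p$ winning is greater than $0$'' means that at least one realization of the hidden profile $S'$ consistent with the revealed top-truncated view $S$ results in $p$ winning, which is again the existence of a winning extension of $S$.

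Given this common characterization, I would prove the biconditional by two answer-preserving polynomial-time reductions whose underlying maps are essentially the identity on the triple $(S, w, p)$. For the direction ``Extension-bribery NP-hard $\Rightarrow$ Evaluation NP-hard,'' I would map an Extension-bribery instance $(S, w, p, \Delta = \mathbf{0}, B)$ to the Evaluation instance $(S, w, p, r = 0)$, treating $S$ as the partially revealed top-truncated view and the uncertainty as ranging over all completions consistent with $S$. For the converse, I would map an Evaluation instance $(S, w, p, r = 0)$ to the Extension-bribery instance with the same $S$, $w$, $p$, all cost functions identically zero, and budget $B = 0$. In both cases the map is computable in polynomial time, and by the characterization above a YES-instance is sent to a YES-instance and a NO-instance to a NO-instance, so NP-hardness transfers in each direction.

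The hard part will be pinning down the probability model so that ``positive winning probability'' coincides exactly with ``existence of a winning extension.'' This requires the modeling assumption that the distribution over hidden profiles $S'$ consistent with the observation $S$ has full support, i.e. every consistent extension occurs with strictly positive probability. Under this assumption the event ``$p$ wins'' has positive probability if and only if some consistent extension elects $p$, which is precisely what the reductions need. A secondary point to verify is that the notion of ``extension'' is the same in both problems: in each case an extension of a top-truncated ballot is a consistent refinement, possibly all the way to a complete order, so the set of admissible completions is identical and the two decision questions genuinely coincide. Once these two modeling details are fixed, the equivalence follows directly from the two reductions.
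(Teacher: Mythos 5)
Your proposal is correct and matches the paper's own argument: both directions use the identity map on $(S, p)$, and the equivalence rests on the observation that ``some extension makes $p$ win'' and ``$p$ wins with positive probability'' are the same event under the implicit full-support assumption on consistent completions. Your explicit flagging of that full-support assumption is a welcome clarification of a point the paper leaves tacit, but the route is the same.
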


Next, before we look at the complexity of Weighted Extension-bribery with zero costs (and hence of Evaluation) for all the voting rules considered in this paper, consider the following version 
(CWCM') of CWCM with top-truncated votes where the only difference is that here we make it necessary for the non-manipulators to always have complete ballots (i.e. in CWCM' only the manipulators can 
caste top-truncated votes). 

\begin{definition}[CWCM']
CWCM' with top-truncated votes is the same problem as CWCM with top-truncated votes, with the additional restriction that the non-manipulators always have complete preference orders.  
\end{definition}

The first thing to observe here is that all the NP-complete results for CWCM with top-truncated votes from Section \ref{constructivemanipulation} hold for CWCM' as well since a close look at the 
proofs for Theorem \ref{rdThm}, Theorem \ref{avThm}, Theorem \ref{eliminateNPC}, and Theorem \ref{copeland-alpha} reveal that in all the cases we showed reductions to instances which always had 
complete orders for the non-manipulators. Hence, we have the following theorem. 

\begin{theorem}
 CWCM' with top-truncated votes is NP-complete for 3-candidate $X^1_{\downarrow}$, 3-candidate $X^2_{av}$, eliminate($X^3$), 3-candidate plurality with runoff rule, and 3-candidate Copeland$^\alpha$ 
for $\alpha \in [0, 1)$, where $X^1$ represents all scoring rules except plurality and veto, $X^2$ represents all scoring rules except plurality, and $X^3$ represents all scoring rules except veto.  
\end{theorem}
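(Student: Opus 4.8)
The plan is to show that CWCM$'$ inherits both NP-membership and NP-hardness from the corresponding CWCM results of Section~\ref{constructivemanipulation}, exploiting the fact that CWCM$'$ is nothing but CWCM restricted to those instances in which the non-manipulators cast complete ballots.

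First I would dispatch membership in NP. Every CWCM$'$ instance is in particular a CWCM instance --- it merely satisfies the additional syntactic constraint that each vote in $S$ is a total order --- and for each of the rules in question a proposed set of manipulator votes forms a polynomial-size certificate whose correctness is checked by simply running the rule and testing whether $p$ is a winner. Hence CWCM$'$ lies in NP for every rule listed.

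For NP-hardness, the key observation is that no new reductions are needed. The reductions already constructed for Theorem~\ref{rdThm} (giving $X^1_{\downarrow}$), Theorem~\ref{avThm} (giving $X^2_{av}$), Theorem~\ref{eliminateNPC} (giving eliminate($X^3$)), and Theorem~\ref{copeland-alpha} (giving Copeland$^{\alpha}$) all output election instances in which the votes assigned to the non-manipulators are complete orders, with top-truncation used only to describe the space of admissible \emph{manipulator} ballots in $T$. Concretely, I would revisit each of these four constructions and verify that the set $S$ produced from a Partition or Fixed-Difference Subset Sum instance consists solely of total orders. Once this is checked, each such map is simultaneously a valid polynomial-time reduction into CWCM$'$, so the hardness transfers verbatim. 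The 3-candidate plurality-with-runoff case then follows from the eliminate($X$) result, since for three candidates plurality with runoff coincides with STV, i.e.\ eliminate(plurality), and plurality is an instance of $X^3$ (it is not isomorphic to veto).

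The only real content --- and hence the step I expect to be the main obstacle --- is the case-by-case inspection confirming that the non-manipulator votes in each of the four reductions are genuinely complete rather than top-truncated. This is a matter of reading off the existing constructions rather than devising anything new, but it must be carried out for every rule, including checking that the Anti-WCM-based reduction underlying Theorem~\ref{eliminateNPC} (which invokes the technique of \cite[Theorem 13]{coleman}) preserves completeness of $S$. Everything else is immediate, and the theorem follows.
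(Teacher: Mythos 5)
Your proposal is correct and follows essentially the same route as the paper: the authors likewise observe that the reductions in Theorems~\ref{rdThm}, \ref{avThm}, \ref{eliminateNPC}, and \ref{copeland-alpha} already place only complete orders in the non-manipulators' set $S$, so the hardness transfers directly to CWCM$'$, with the plurality-with-runoff case obtained from its coincidence with STV for three candidates. Your explicit flagging of the Anti-WCM-based construction as the one place requiring careful inspection is a reasonable precaution, but it matches what the paper implicitly asserts.
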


The second observation to make regarding CWCM' with top-truncated votes is that it is a special case of Weighted Extension-bribery with zero costs, where some the ballots are complete (those 
of the non-manipulators) and some are empty (those of the manipulators). Therefore, any hardness result for CWCM' carries over for Weighted Extension-bribery with zero costs (and also for the 
Evaluation problem as a consequence of Theorem \ref{extbribEval}), and so we have the following theorem.

\begin{theorem} \label{extbrib-NPC}
 Weighted extension-bribery problem with zero costs (and hence even the Evaluation problem) is NP-complete for 3-candidate $X^1_{\downarrow}$, 3-candidate $X^2_{av}$, eliminate($X^3$), 3-candidate 
plurality with runoff, and 3-candidate Copeland$^\alpha$ for $\alpha \in [0, 1)$, where $X^1$ represents all scoring rules except plurality and veto, $X^2$ represents all scoring rules except 
plurality, and $X^3$ represents all scoring rules except veto. 
\end{theorem}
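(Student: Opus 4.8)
The plan is to establish NP-completeness in the two standard parts: membership in NP, and NP-hardness inherited from CWCM'. For membership in NP, I would observe that a yes-certificate for Weighted Extension-bribery with zero costs is simply the collection of extensions applied to the (possibly empty or top-truncated) ballots of $S$; this certificate is of size polynomial in the input, and for every voting rule considered in the paper the winner is computable in polynomial time, so verifying that $p$ is a winner under the guessed extension is polynomial. Hence Weighted Extension-bribery with zero costs is in NP, and, using the analogous certificate (a positive-probability realization of $S'$ consistent with the revealed $S$ under which $p$ wins), the Evaluation problem with $r = 0$ is in NP as well.

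For NP-hardness I would make precise the observation stated immediately before the theorem, namely that CWCM' with top-truncated votes is a special case of Weighted Extension-bribery with zero costs. Given an instance of CWCM' for one of the listed rules, I construct an instance of Weighted Extension-bribery as follows: each non-manipulator's complete ballot is placed into $S$ unchanged, each manipulator contributes an empty ballot to $S$ carrying that manipulator's weight, every cost function $\delta_i$ is set to the zero function, and the budget is $B = 0$. Extending a complete ballot leaves it unchanged, so the non-manipulators' votes are fixed exactly as in CWCM'; an empty ballot, on the other hand, may be extended at zero cost into any (top-truncated or complete) ordering, so each manipulator is free to cast an arbitrary vote, again exactly as in CWCM'. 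Consequently $p$ can be made a winner in the CWCM' instance if and only if some zero-cost extension of $S$ makes $p$ a winner, i.e.\ the constructed instance is a yes-instance. Since the preceding theorem gives that CWCM' is NP-complete for 3-candidate $X^1_{\downarrow}$, 3-candidate $X^2_{av}$, eliminate($X^3$), 3-candidate plurality with runoff, and 3-candidate Copeland$^\alpha$ with $\alpha \in [0,1)$, NP-hardness of Weighted Extension-bribery with zero costs follows for each of these rules, and Theorem \ref{extbribEval} transfers this hardness to the Evaluation problem.

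The only genuinely delicate point is the faithfulness of encoding the manipulators as empty ballots: I would verify that the extension operation of Baumeister et al.\ permits a \emph{partial} (top-truncated) completion of an empty ballot and not merely a full linear extension, so that an empty ballot can realize precisely the set of votes available to a CWCM' manipulator under each rule's top-truncated evaluation scheme. Once this correspondence is checked, everything else is immediate, since the NP-completeness of CWCM' for exactly this list of rules is already in hand and Theorem \ref{extbribEval} supplies the bridge to the Evaluation problem; in particular, no new gadget or combinatorial reduction is required.
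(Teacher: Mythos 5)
Your proposal is correct and matches the paper's own argument: the paper likewise observes that CWCM$'$ is the special case of Weighted Extension-bribery with zero costs in which non-manipulators contribute complete ballots and manipulators contribute empty ones, inherits NP-hardness from the CWCM$'$ theorem, and invokes Theorem~\ref{extbribEval} to transfer hardness to the Evaluation problem. Your additional care about NP membership and about empty ballots being extendable to arbitrary top-truncated votes is sound but does not change the route.
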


Additionally, we also have the following result which shows that Weighted Extension-bribery with zero costs is in P for certain voting rules considered in this paper.

\begin{theorem} \label{ext-eval}
  Weighted Extension-bribery  with zero costs (and hence even the Evaluation problem) is in $P$ for $X_{\uparrow}$, plurality$_{\downarrow}$, veto$_{\downarrow}$, plurality$_{av}$, and the 
maximin protocol, where $X$ is any positional scoring rule.
\end{theorem}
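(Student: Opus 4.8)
The plan is to treat all five rules uniformly by producing, for each, a \emph{dominant extension}: a single extension $E^\ast$ that can be written down in polynomial time and that simultaneously maximizes $p$'s standing and weakly minimizes the standing of every competitor. If such an $E^\ast$ exists, then whenever some extension makes $p$ a winner so does $E^\ast$ (if an extension $E$ gives $p$ a score at least that of each competitor, then under $E^\ast$ the value for $p$ is no smaller and the value for each competitor is no larger), so it suffices to compute the winner under $E^\ast$ and test whether $p$ is among the winners. Throughout I use that an extension only appends unranked candidates below the ranked block, so it never reorders already-ranked candidates and never changes which candidate tops a vote. The Evaluation claim then follows from the equivalence of Theorem~\ref{extbribEval}.

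\textbf{Round up.} Under $X_\uparrow$ the score a candidate draws from a vote is $\alpha_i$ if it is ranked in position $i$ and $\alpha_m$ if it is unranked, independently of how many candidates are ranked; so appending a single candidate at position $k+1$ only raises that candidate's score, from $\alpha_m$ to $\alpha_{k+1}\ge\alpha_m$, and leaves all others unchanged. The dominant extension therefore appends $p$ at position $k+1$ in each vote where $p$ is unranked and appends nothing else: this realizes the largest score $p$ can attain in every vote while pinning every competitor at its minimum $\alpha_m$ wherever it is unranked.

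\textbf{Round-down and average for plurality/veto.} For plurality$_\downarrow$ the governing observation is that a vote awards its one point to its (unchangeable) top candidate precisely when it ranks at least $m-1$ candidates and awards nothing otherwise; hence the dominant extension lengthens every $p$-topped vote to $m-1$ candidates and leaves all other votes untouched. For veto$_\downarrow$ the observation is that a candidate scores $1$ from each vote ranking it in a position $\le m-1$ and $0$ otherwise (in particular, a candidate completed into last place scores $0$); hence the dominant extension appends $p$ just below the ranked block in every vote in which $p$ is unranked but is not the unique unranked candidate, and appends nothing else. In both cases one must also verify that the index shift caused by lengthening a vote does not inadvertently help a competitor, which holds because the competitors sit in positions that still score $0$ (plurality) or remain in the score-$1$ band (veto). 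For plurality$_{av}$ the statement is stronger still: only a vote's top candidate ever scores, so scores are \emph{invariant} under extension and $p$ can be made a winner iff it already is one, which is decided directly.

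\textbf{Maximin.} I expect this to be the main obstacle, since the maximin score is a minimum over pairwise margins and I must exclude any trade-off between raising $p$ and lowering the competitors. Mirroring Theorem~\ref{maxminThm}, the key lemma is that an extension can change a pairwise count $N_P(c_i,c_j)$ only when $c_i$ and $c_j$ are both originally unranked in a vote, and then only by converting a tie into a strict order. The dominant extension appends $p$ alone at the top of the unranked block in every vote where $p$ is unranked. I will argue that this single extension (i) maximizes $N_P(p,c_j)$ for every $j$, and hence $p$'s maximin score, and (ii) \emph{minimizes} $N_P(c,c_j)$ for every ordered pair of competitors, because leaving the remaining unranked candidates untouched never turns a tie between two competitors into a strict preference; consequently every competitor's maximin score is minimized at the same time, so no conflict arises. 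Computing the maximin winner under this extension and comparing $p$ against the competitors completes the proof.
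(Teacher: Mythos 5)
Your proof is correct and follows essentially the same strategy as the paper: for each rule, exhibit a single polynomial-time-computable \emph{dominant} extension that simultaneously maximizes $p$'s standing and weakly minimizes every competitor's, then check whether $p$ wins under it. The one place you genuinely diverge is plurality$_{\downarrow}$: the paper's proof completes \emph{every} vote (placing $p$ at the topmost available position followed by all remaining candidates), whereas you lengthen only the $p$-topped votes to $m-1$ ranked candidates and leave the rest untouched. Your version is the right one --- completing a vote topped by a competitor $c$ promotes $c$'s score from $0$ to $1$ under round-down plurality, so the paper's literal prescription can turn a winning instance into a losing one (e.g.\ $m=3$ with a weight-$10$ vote $(a)$ and a weight-$9$ vote $(p)$: with no extension all scores are $0$ and $p$ ties for the win, but completing both votes gives $a$ ten points to $p$'s nine). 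Your handling of the veto$_{\downarrow}$ edge case (when $p$ is the unique unranked candidate) and the explicit tie-preservation argument for maximin are likewise more careful than, but consistent with, the paper's one-line justification.
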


Next we show that if Evaluation is hard then constructive manipulation with even a single manipulator is hard. 

\begin{definition}[CWIM-TTU] 
In Constructive Weighted Individual Manipulation under Top-truncated Uncertainty (CWIM-TTU), we are given a set $S$ of partially-revealed top-truncated ballots of the non-manipulators (which in turn 
may contain complete or top-truncated ballots themselves), the weight of the manipulator, a preferred candidate $p$, and a number $r \in [0, 1]$. We are asked if the manipulator can cast his vote in 
such a way so as to ensure $p$ wins with a probability greater than $r$.  
\end{definition}

\begin{theorem} \label{CWIMhard}
 If Evaluation under Top-truncated Uncertainty is NP-hard for a given protocol with $k$ candidates, then CWIM-TTU with top-truncated votes is also NP-hard for it with $k$ candidates.  
\end{theorem}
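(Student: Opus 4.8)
The plan is to prove the statement directly by exhibiting a polynomial-time, candidate-count-preserving reduction from Evaluation under Top-truncated Uncertainty to CWIM-TTU. Since CWIM-TTU differs from Evaluation only in that a single additional manipulator is present whose ballot we get to choose, the natural strategy is to make that manipulator \emph{powerless}, so that the CWIM-TTU question ``can the manipulator cast a vote making $p$ win with probability $> r$?'' collapses exactly to the Evaluation question ``does $p$ win with probability $> r$?''.

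Concretely, given an Evaluation instance consisting of a partially-revealed set $S$ of weighted top-truncated ballots over $k$ candidates, a preferred candidate $p$, and a threshold $r$, I would build the CWIM-TTU instance whose non-manipulator set is exactly $S$ (with the same weights and the same $k$ candidates), whose preferred candidate is $p$, whose threshold is $r$, and which adds one manipulator of weight $0$. This map is computable in polynomial time and does not introduce new candidates, so it respects the ``$k$ candidates'' clause of the theorem.

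For correctness I would argue that a weight-$0$ voter is invisible to every protocol considered in the paper: it contributes $0$ to each candidate's total under any positional scoring rule, $0$ to every pairwise margin $N_P(c_i,c_j)$ (hence leaves all Copeland$^{\alpha}$ and maximin scores untouched), and $0$ to every round's tally in the elimination rules. Consequently, for each completion of the uncertain ballots in $S$ the winner is identical regardless of whether the manipulator is present and regardless of how he votes; the random experiment driving the winning probability is governed solely by the completions of $S$. Thus the probability that $p$ wins is a \emph{constant} function of the manipulator's ballot, equal to the winning probability in the original Evaluation instance, and so the manipulator can guarantee that $p$ wins with probability $> r$ if and only if that probability already exceeds $r$. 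The two instances are therefore equivalent, and NP-hardness transfers from Evaluation to CWIM-TTU. Since the hardness established earlier (Theorem~\ref{extbrib-NPC}) is in fact for $r=0$, the same reduction shows CWIM-TTU is NP-hard even for $r=0$.

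The only point requiring genuine care — and the step I would flag as the main obstacle — is justifying that no protocol treats a weight-$0$ ballot specially, for instance by counting the number of voters rather than summing weights, or by breaking ties using the number of cast ballots. I would discharge this uniformly by appealing to the fact that all the rules here are defined purely through weighted sums of scores or weighted pairwise comparisons, so a zero-weight ballot can never alter any quantity on which the winner depends (and in particular the manipulator may cast any valid top-truncated vote, e.g.\ $(p)$, with no effect). This closes the argument with no protocol-specific case analysis.
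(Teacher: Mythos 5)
Your proposal is correct and is exactly the paper's reduction: the paper's entire proof is ``Construct an instance of CWIM-TTU from an Evaluation instance by just adding a manipulator of weight 0.'' Your additional justification that a weight-$0$ ballot cannot affect any weighted score or pairwise tally is a reasonable elaboration of the same idea.
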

\begin{proof}
 Construct an instance of CWIM-TTU from an Evaluation instance by just adding a manipulator of weight 0. \qed
\end{proof}
 
Combining Theorem \ref{extbrib-NPC} and Theorem \ref{CWIMhard}, we have the following theorem which says that for all the protocols considered in Section \ref{constructivemanipulation} for which 
CWCM with top-truncated ballots was hard, even individual manipulation with top-truncated votes is hard when there is uncertainty about the non-manipulators' votes. 

\begin{theorem}
CWIM-TTU with top-truncated votes is NP-complete for 3-candidate $X^1_{\downarrow}$, 3-candidate $X^2_{av}$, eliminate($X^3$), 3-candidate plurality with runoff rule, and 3-candidate Copeland$^\alpha$ 
for $\alpha \in [0, 1)$, where $X^1$ represents all scoring rules except plurality and veto, $X^2$ represents all scoring rules except plurality, and $X^3$ represents all scoring rules except veto. 
\end{theorem}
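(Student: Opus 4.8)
The plan is to obtain the theorem as a direct composition of Theorem~\ref{extbrib-NPC} and Theorem~\ref{CWIMhard}, together with a short membership argument. Since NP-completeness requires both NP-hardness and membership in NP, I would treat these two parts separately.

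For NP-hardness, I would start from Theorem~\ref{extbrib-NPC}, which states that Weighted Extension-bribery with zero costs --- and hence, via Theorem~\ref{extbribEval}, the Evaluation problem (i.e.\ Evaluation under Top-truncated Uncertainty with $r = 0$) --- is NP-hard for each of the listed protocols: 3-candidate $X^1_{\downarrow}$, 3-candidate $X^2_{av}$, eliminate($X^3$), 3-candidate plurality with runoff, and 3-candidate Copeland$^\alpha$ for $\alpha \in [0,1)$. I would then invoke Theorem~\ref{CWIMhard}, whose reduction simply appends a weight-$0$ manipulator to an Evaluation instance, to lift each of these hardness results to CWIM-TTU. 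The key point to record is that this reduction preserves the number of candidates: whenever Evaluation is hard for a protocol on $k$ candidates, CWIM-TTU is hard for the same protocol on $k$ candidates. Hence the candidate bounds carry over verbatim --- three candidates for the scoring, runoff, and Copeland families, and the same bound as in Theorem~\ref{eliminateNPC} for eliminate($X^3$).

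For membership in NP, I would observe that the hardness instances produced above all use $r = 0$, so the relevant decision question ``can the manipulator cast a vote so that $p$ wins with probability greater than $0$?'' is equivalent to ``does there exist both a manipulator vote and a completion of the partially-revealed non-manipulator ballots under which $p$ is a winner?''. This is witnessed by a polynomial-size certificate consisting of the manipulator's (top-truncated) ballot together with a completion of every partially-revealed ballot; since winner determination runs in polynomial time for every protocol in the list, the certificate can be checked efficiently, placing the problem in NP.

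Finally, I expect the only genuinely delicate point to be the membership argument rather than the hardness composition, which is essentially mechanical. Specifically, I would be careful to justify that with $r = 0$ the probabilistic winning condition collapses to a pure existence statement over completions, so that a single guessed scenario suffices as a certificate and no counting of completions is required. Everything else follows from chaining the two cited theorems and noting that both preserve the number of candidates.
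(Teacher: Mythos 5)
Your proposal matches the paper's argument exactly: the theorem is obtained by composing Theorem~\ref{extbrib-NPC} (NP-hardness of the Evaluation problem for the listed protocols) with the candidate-preserving weight-$0$-manipulator reduction of Theorem~\ref{CWIMhard}. Your explicit NP-membership certificate (a manipulator ballot plus a completion of the partially revealed ballots, checkable in polynomial time since $r=0$ collapses the probabilistic condition to existence) is a detail the paper leaves implicit, but it is correct and does not change the route.
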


Finally, we conclude this section by showing that the CWIM-TTU with top-truncated votes is in $P$ for eliminate(veto) (it is easy to see that it is also in $P$ for 
all the rules mentioned in Theorem \ref{ext-eval}). We show this by first proving that Weighted Extension-bribery with zero costs (and hence Evaluation) is in $P$ for eliminate(veto). 

\begin{theorem} \label{elVeto}
In the unique winner model, Weighted Extension-bribery with zero costs (and Evaluation) is in $P$ for eliminate(veto) when the number of candidates is bounded. 
\end{theorem}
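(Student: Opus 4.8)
The plan is to reduce, via Theorem~\ref{extbribEval}, to showing that Weighted Extension-bribery with zero costs is in $P$ for eliminate(veto) with a bounded number of candidates $m$; the Evaluation statement then follows immediately. First I would pin down how extensions interact with the round-up handling of top-truncated ballots in eliminate(veto): an active ballot vetoes (assigns $0$ to) every remaining candidate it leaves unranked, whereas a completed ballot vetoes only its lowest-ranked remaining candidate, so appending a free candidate pulls it up out of the vetoed set and, once a ballot is completed, exactly one candidate is vetoed per round (and a ballot is ignored once all its listed candidates are gone). The key leverage is that $m = O(1)$, so the number of elimination orders, the number of distinct top-truncated forms a ballot can take, and the number of extensions of each form are all constants.

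Next I would enumerate all elimination orders $\sigma = (e_1, \dots, e_{m-1}, p)$ that end with $p$ (at most $(m-1)!$ of them). Since $p$ can be made the unique winner by some extension if and only if some such $\sigma$ is realizable, it suffices to decide realizability of a fixed $\sigma$ in polynomial time. Writing $V_r(c)$ for the total weight of active ballots that veto $c$ in round $r$, realizing $\sigma$ is exactly the requirement that in every round $r$ candidate $e_r$ be strictly most-vetoed, i.e. $V_r(e_r) > V_r(c)$ for all $c \in \{e_r,\dots,p\}\setminus\{e_r\}$, where the strictness encodes the unique-winner model. Grouping the voters by their constantly-many forms, each ballot's extension choice induces one of constantly-many veto-patterns over the $m-1$ rounds, and each $V_r(\cdot)$ is the weighted sum of these patterns; realizability of $\sigma$ thus becomes the question of whether extensions can be chosen so that all the (constantly-many) strict inequalities hold simultaneously.

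The crux is solving this choice problem in polynomial time, and this is exactly where the extension constraint makes the setting harder than CWCM. For CWCM on eliminate(veto), Coleman and Teague's optimal strategy is to have every unconstrained manipulator vote in reverse-elimination order, piling its full weight onto $e_r$ in round $r$ for every $r$ at once; the natural analogue here is to extend each ballot in reverse-$\sigma$ order below its fixed top, so that it vetoes $e_r$ in round $r$ whenever its fixed top permits $e_r$ to sink to the bottom. I would first try to prove, by an exchange argument, that such reverse-$\sigma$ extensions are without loss of generality optimal for making each $e_r$ most-vetoed, since this would reduce realizability of $\sigma$ to evaluating a single extension profile and checking the inequalities directly, which is clearly polynomial. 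The main obstacle is precisely this optimality claim: unlike the free manipulators of CWCM, the fixed tops constrain which candidate a ballot can veto in a given round, and because the extension is chosen once but used in every round, a single ballot's best extension can trade help in a tight round against collateral in a later one depending on global slack, so reverse-$\sigma$ need not be pointwise optimal. To handle this I would fall back on the bounded number of form--extension types: the choice problem is an integer feasibility problem whose dimension (the number of form--extension variables) is a constant depending only on $m$, so it can be decided in polynomial time by Lenstra's fixed-dimension integer-programming algorithm, with care taken that the variables correspond to an actual assignment of the weighted voters to extensions rather than an arbitrary split of each form's weight. Either route checks one order in polynomial time; with constantly many orders the whole procedure is polynomial, and Theorem~\ref{extbribEval} transfers the result to Evaluation.
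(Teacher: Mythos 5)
Your high-level skeleton --- enumerate the constantly many elimination orders ending in $p$ and decide realizability of each one --- is exactly the paper's, and the paper resolves the realizability step by asserting precisely the exchange claim you declined to assert: that completing every ballot in reverse-$\sigma$ order realizes $\sigma$ whenever some extension does. Your suspicion about that claim is well founded; it can genuinely fail. Take candidates $\{a,b,p\}$ and voters with ballots $(a)$ of weight $8$, $(a)$ of weight $4$, $(b)$ of weight $6$, and $(p)$ of weight $5$. The extension $(a,p,b)_8$, $(a,b,p)_4$, $(b,p,a)_6$, $(p,b,a)_5$ gives round-one veto weights $11,8,4$ for $a,b,p$, so $a$ is eliminated, and then $b$ is vetoed by $13$ against $10$, so $p$ is the unique winner via $\sigma=(a,b,p)$. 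But the reverse-$\sigma$ completion sends \emph{both} $(a)$-ballots to $(a,p,b)$, giving $b$ veto weight $12>11$, so $b$ rather than $a$ is eliminated first; the completion for the other order $(b,a,p)$ eliminates $b$ and then $p$. Realizing $\sigma$ here requires splitting the weight of a single ballot form across two different extensions, which no single canonical completion can do.

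The difficulty is that your fallback does not close this gap. Grouping voters into constantly many form--extension classes does turn realizability of $\sigma$ into a system of linear inequalities in the aggregate weights $w_{f,x}$, and Lenstra decides integer feasibility of that system in fixed dimension --- but feasibility has to be decided over the aggregate vectors that are \emph{achievable} by partitioning the actual multiset of individual voter weights within each form, and that is exactly the ``care'' you defer to a parenthetical. It is not a technicality: in the example above the constraints for $\sigma=(a,b,p)$ reduce to requiring the total $(a,p,b)$-weight $u$ drawn from the $(a)$-voters to lie in the open interval $(6.5,\,11)$, and in general a short interval constraint on a single aggregate variable is Subset Sum, so the relaxed program over free integers $w_{f,x}$ can be feasible while no assignment of the actual voters realizes it. Lenstra certifies the relaxation, not the problem, and a constant-dimension IP over aggregates cannot express the partition constraint on an unbounded number of voters. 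To complete the proof you need either a correct structural lemma identifying a polynomial-size set of candidate splits that suffices (the role the reverse-$\sigma$ claim was meant to play), or a genuinely different algorithmic idea; as written, both branches of your argument are open at the same point.
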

%

\begin{theorem} \label{eVCWIM}
 In the unique winner model, CWIM-TTU with top-truncated votes (with $r = 0$) is in $P$ for eliminate(veto) when the number of candidates is bounded. 
\end{theorem}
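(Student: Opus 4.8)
The plan is to reduce CWIM-TTU (with $r=0$) for eliminate(veto) to a constant number of instances of the Evaluation under Top-truncated Uncertainty problem and then invoke Theorem \ref{elVeto}. This is essentially the converse direction of Theorem \ref{CWIMhard}: there we reduced Evaluation to CWIM-TTU by planting a weight-$0$ manipulator, whereas here I want to reduce CWIM-TTU to polynomially many Evaluation instances. The key enabling observation is that with the number of candidates $m$ bounded, the single manipulator has only a constant number of possible ballots available, namely $\sum_{k=1}^{m} \frac{m!}{(m-k)!}$ distinct top-truncated orders (linear orders over the non-empty subsets of $C$), which depends only on $m$.

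First I would enumerate every top-truncated ballot $t_1, \dots, t_N$ (with $N = O(1)$) that the manipulator could submit. For each candidate ballot $t_j$, I would build an Evaluation instance whose voter set consists of the non-manipulators' partially-revealed ballots $S$ together with one additional, fully-determined voter casting $t_j$ at the manipulator's weight, with target candidate $p$ and threshold $r=0$. A fully-known top-truncated ballot is just the degenerate case of a partially-revealed ballot whose revealed portion already fixes the entire vote, so it is a legitimate input to the Evaluation problem. I would then run the polynomial-time algorithm of Theorem \ref{elVeto} on each of the $N$ instances and answer ``yes'' precisely when at least one of them does.

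For correctness, note that under $r=0$ the Evaluation instance built from $t_j$ answers ``yes'' exactly when some completion of the non-manipulators' uncertain preferences makes $p$ the unique winner given that the manipulator voted $t_j$ (the probability being positive is equivalent to the existence of such a completion). Hence some instance answers ``yes'' iff there exist both a manipulator ballot and a completion of the hidden preferences making $p$ win with positive probability, which is exactly the CWIM-TTU question with $r=0$. Since $N$ is constant and each Evaluation call runs in polynomial time by Theorem \ref{elVeto}, the overall procedure is polynomial.

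The step that needs care, and the only real obstacle, is pinning down the interface between the two problems: I must confirm that the Evaluation model underlying Theorem \ref{elVeto} admits a mixture of genuinely-uncertain (partially-revealed) ballots and fully-determined ballots, and that inserting the manipulator as an ordinary weighted voter disturbs neither the unique-winner semantics nor the $r=0$ ``positive probability equals existence of a winning completion'' equivalence. Everything combinatorially hard about eliminate(veto) with bounded candidates has already been absorbed into Theorem \ref{elVeto}, so once this interface is justified the reduction and the constant-size enumeration are routine.
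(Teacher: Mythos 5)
Your proof is correct and takes essentially the same approach as the paper: enumerate the manipulator's constantly-many possible ballots and decide each one with the polynomial-time Evaluation algorithm of Theorem \ref{elVeto}. The only cosmetic difference is that the paper enumerates just the $m!$ complete orders, invoking Theorem \ref{vetoThm} to justify that this restriction loses nothing, whereas you enumerate all top-truncated ballots directly (a larger but still constant set), which avoids that appeal.
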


The above result also implies that, in the unique winner model, CWCM under Top-truncated Uncertainty and with top-truncated votes (with $r = 0$) is in $P$ for eliminate(veto) when the number of 
candidates is bounded, since in eliminate(veto) any manipulation that can be induced by an arbitrary set of the manipulators' vote can be induced if all the manipulators vote in the same way 
\cite[Lemma 12]{coleman}.

\subsection{When there is only probabilistic information on the non-manipulators' votes}
Manipulation under probabilistic uncertainty was introduced by Conitzer et al. \cite{conitzer}. In their work they introduced the Weighted Evaluation problem which, given a probability 
distribution on the votes, and a number $r \in [0,1]$, asks if the probability of a candidate winning is greater than $r$. Subsequently they also proved that if CWCM (with complete votes) for a voting 
protocol is hard then so is Weighted Evaluation \cite[Theorem 15]{conitzer}. Now, since we additionally allow top-truncated votes, we can state the following result which is almost equivalent to 
\cite[Theorem 15]{conitzer} with the only difference being that the ``set of all possible votes'' would now contain top-truncated votes as well and that the reduction here is from CWCM with 
top-truncated votes. In fact, most of the results in this section are only extensions to the corresponding result from Conitzer et al.'s  paper \cite{conitzer} that arise as a result of allowing 
top-truncated voting.

\begin{theorem}
 If CWCM with top-truncated votes is NP-hard for a given protocol with $k$ candidates, then Weighted Evaluation when top-truncated votes are allowed is also NP-hard for it (with $k$ candidates) even 
if $r=0$, the votes are drawn independently, and only the following types of distributions are allowed: (1) the vote's distribution is uniform over all possible votes, or (2) the vote's distribution 
puts all of the probability mass on a single vote.   
\end{theorem}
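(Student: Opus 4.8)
The plan is to follow the reduction template of Conitzer et al.\ \cite[Theorem 15]{conitzer}, adapting it so that the set of possible votes now includes top-truncated ballots and so that the starting point is an arbitrary instance of CWCM with top-truncated votes. Given such a CWCM instance --- a set $S$ of weighted non-manipulator votes, a set $T$ of manipulator weights, and a preferred candidate $p$ --- I would construct a Weighted Evaluation instance as follows. For each non-manipulator in $S$, assign a distribution of type (2) that places all of its probability mass on the single (possibly top-truncated) vote that voter actually casts. For each manipulator in $T$, assign a distribution of type (1) that is uniform over all possible votes, where ``all possible votes'' now ranges over every complete \emph{and} top-truncated ordering available to a voter. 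I would keep all weights unchanged, set $r = 0$, and ask whether $p$ wins with probability strictly greater than $0$.

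Next I would argue that this Weighted Evaluation instance is a yes-instance if and only if the original CWCM instance is. For the forward direction, suppose the manipulators can cast votes so that $p$ wins. Because each manipulator's distribution is uniform over all possible votes (top-truncated ballots included), the particular manipulating profile occurs with strictly positive probability, while the non-manipulators' fixed votes occur with probability one; hence $p$ wins with probability $> 0 = r$. For the converse, suppose $p$ wins with positive probability. Then there is at least one realization of the random votes under which $p$ wins. Since the non-manipulators' distributions are degenerate, that realization pins the non-manipulators to their given votes and assigns each manipulator some concrete (possibly top-truncated) ballot; this realization is exactly a successful cast of the votes in $T$, so the CWCM instance is a yes-instance.

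Finally I would verify the required side conditions: the chosen distributions are mutually independent, only types (1) and (2) are used, and $r = 0$, exactly as the statement demands. The reduction is polynomial since a type-(2) distribution is specified by a single vote and a type-(1) distribution can be described implicitly as ``uniform over all possible votes'' without enumerating them.

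The main point requiring care --- and the only genuine departure from \cite{conitzer} --- is the meaning of ``all possible votes'' in the type-(1) distribution: it must be precisely the set of ballots a manipulator is permitted to cast in CWCM with top-truncated votes, and in particular it must include every top-truncated ordering. This is what guarantees that any successful top-truncated manipulation in the CWCM instance lies in the support of the uniform distribution, so that the two problems line up under the equivalence above. Everything else is a direct transcription of the complete-vote argument of \cite[Theorem 15]{conitzer}.
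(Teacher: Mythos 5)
Your proposal is correct and follows exactly the route the paper takes: the paper's proof simply defers to the reduction of Conitzer et al.\ (Theorem 15), with the same two observations you make explicit --- degenerate distributions for the non-manipulators, a uniform distribution for each manipulator whose support is now the set of all complete \emph{and} top-truncated ballots, and $r=0$. Your write-up is a faithful (and more detailed) transcription of that argument, including the one point the paper itself flags as the only change, namely the enlarged meaning of ``all possible votes.''
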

\begin{proof}
 We can proceed exactly as in \cite[Theorem 15]{conitzer} to prove this result. \qed
\end{proof}

\begin{corollary}
 Weighted Evaluation when top-truncated votes are allowed is NP-hard for 3-candidate $X^1_{\downarrow}$, 3-candidate $X^2_{av}$, eliminate($X^3$), 3-candidate plurality with runoff, and 
3-candidate Copeland$^\alpha$ for $\alpha \in [0, 1)$, when the votes are drawn independently, and the distributions allowed are: (1) uniform over all possible votes, or (2) the vote's distribution 
puts all of the probability mass on a single vote, where $X^1$ represents all scoring rules except plurality and veto, $X^2$ represents all scoring rules except plurality, and $X^3$ represents all 
scoring rules except veto.  
\end{corollary}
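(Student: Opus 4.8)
The plan is to derive this as a direct corollary of the preceding theorem, which states that NP-hardness of CWCM with top-truncated votes for a $k$-candidate protocol implies NP-hardness of Weighted Evaluation (with top-truncated votes allowed) for that same protocol under either of the two admissible distribution types. Thus it suffices to invoke, for each listed protocol, the corresponding NP-completeness result for CWCM with top-truncated votes already established in Section~\ref{constructivemanipulation}.

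Concretely, I would proceed as follows. For 3-candidate $X^1_{\downarrow}$ (all scoring rules except plurality and veto), Theorem~\ref{rdThm} supplies NP-completeness of CWCM with top-truncated votes; for 3-candidate $X^2_{av}$ (all scoring rules except plurality), Theorem~\ref{avThm} does the same; for eliminate($X^3$) (all scoring rules except veto), Theorem~\ref{eliminateNPC} applies; for the 3-candidate plurality with runoff rule, the corollary following Theorem~\ref{eliminateNPC} (using that this rule coincides with STV on three candidates) gives the result; and for 3-candidate Copeland$^\alpha$ with rational $\alpha \in [0,1)$, Theorem~\ref{copeland-alpha} supplies it. In every case the problem is NP-complete, and in particular NP-hard, for the indicated number of candidates.

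Having collected these hardness results, I would apply the preceding theorem to each protocol in turn. Since CWCM with top-truncated votes is NP-hard for each of them, the preceding theorem immediately yields NP-hardness of Weighted Evaluation when top-truncated votes are allowed, even when $r = 0$, the votes are drawn independently, and only the two stated distribution types (uniform over all possible votes, or all probability mass on a single vote) are permitted. Applying this uniformly across the five families establishes the corollary.

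There is essentially no obstacle here: the statement is a mechanical instantiation of a general reduction already in hand. The only point requiring a moment's care is bookkeeping---ensuring the candidate counts are carried through correctly (the preceding theorem preserves the number $k$ of candidates, so the three-candidate hardness results transfer to three-candidate Weighted Evaluation, while the unrestricted eliminate($X^3$) result of Theorem~\ref{eliminateNPC} transfers for every $m \geq 3$), and verifying that the exclusion sets $X^1, X^2, X^3$ coincide exactly with those in the cited theorems, so that no protocol is claimed hard outside the scope of an established reduction.
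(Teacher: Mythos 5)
Your proposal is correct and matches the paper's intent exactly: the corollary is an immediate consequence of the preceding theorem combined with the CWCM hardness results of Theorems~\ref{rdThm}, \ref{avThm}, \ref{eliminateNPC} (and its runoff corollary), and \ref{copeland-alpha}, which is precisely the instantiation you carry out.
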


Next, we show a relation between the Weighted Evaluation and manipulation with a single manipulator as in \cite{conitzer}. Note that the difference between CWIM with Uncertainty (CWIM-U) that we 
define below and CWIM-TTU that we defined in the Section \ref{TT-NM} is in how the partial information about the non-manipulators' votes are specified. 

\begin{definition}[CWIM-U] 
In Constructive Weighted Individual Manipulation under Uncertainty (CWIM-U), given a distribution over all the non manipulators' votes, the weights of the non-manipulators, the weight of the 
manipulator, a preferred candidate $p$, and a number $r \in [0, 1]$, we are asked if the manipulator can cast his vote in such a way so as to ensure $p$ wins with a probability greater than $r$.  
\end{definition}

\begin{theorem}
 If Weighted Evaluation with top-truncated votes is NP-hard for a protocol with $k$ candidates and some restrictions on the distribution, then CWIM-U with top-truncated votes is also NP-hard for it 
with $k$ candidates and the same restrictions on the distribution.
\end{theorem}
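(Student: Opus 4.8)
The plan is to give a polynomial-time reduction from Weighted Evaluation with top-truncated votes to CWIM-U with top-truncated votes, mirroring the weight-$0$ manipulator trick already used in the proof of Theorem \ref{CWIMhard} (and analogous to the corresponding reduction of Conitzer et al.\ \cite{conitzer} for the complete-votes setting). I would take an arbitrary instance of Weighted Evaluation consisting of a distribution $\mathcal{D}$ over the (possibly top-truncated) votes, the voters' weights, the candidate $p$, and the threshold $r$, and build a CWIM-U instance over the same candidate set by declaring \emph{all} of these voters to be non-manipulators, keeping $\mathcal{D}$, their weights, $p$, and $r$ unchanged, and introducing a single manipulator whose weight is $0$.

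Correctness follows from the observation that a weight-$0$ ballot is inert in every protocol considered here: positional scores, the Copeland$^\alpha$ and maximin pairwise tallies, and the elimination counts used by eliminate($X$) and plurality with runoff are all \emph{weighted} sums, so a voter contributing weight $0$ leaves every candidate's relevant quantity—and therefore the winner for each realization of $\mathcal{D}$—unchanged, no matter how that voter ranks the candidates. Consequently, for every vote the manipulator might cast, the probability that $p$ wins is exactly the probability that $p$ wins in the original Weighted Evaluation instance. Hence the CWIM-U instance is a yes-instance (there exists a manipulator vote making $p$ win with probability greater than $r$) if and only if the Weighted Evaluation instance is a yes-instance ($p$ wins with probability greater than $r$).

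Finally I would check the bookkeeping. The reduction is computable in polynomial time, since it merely copies the instance and appends one voter; it introduces no new candidates, so the candidate count remains $k$; and because the non-manipulators' distribution is precisely $\mathcal{D}$ (the appended manipulator casts a single chosen vote and carries no distribution of its own), any structural restriction assumed on the distribution in the Weighted Evaluation instance is preserved verbatim in the CWIM-U instance. The only thing that genuinely needs verification—and the sole place the argument could conceivably fail for some exotic rule—is the inertness of a weight-$0$ ballot; for all the weighted rules treated in this paper this is immediate, and it is exactly the same property implicitly relied upon in Theorem \ref{CWIMhard}.
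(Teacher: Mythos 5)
Your proposal is correct and uses exactly the paper's argument: the paper's own proof is the one-line reduction "add a manipulator of weight 0," and your elaboration of why a weight-$0$ ballot is inert and why the distributional restrictions are preserved is just a more careful write-up of that same construction.
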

\begin{proof}
  Construct an instance of CWIM-U from an arbitrary Weighted Evaluation instance by just adding a manipulator of weight 0. \qed
\end{proof}

Finally, we show that Weighted Evaluation when top-truncated votes are allowed can be hard even if CWCM with top-truncated votes is in $P$. For this we consider eliminate(veto) for which CWCM with 
top-truncated  votes was shown to be in $P$ for bounded number of candidates (see Corollary \ref{vetoCorr}). 

\begin{theorem} \label{eV-WE}
 In eliminate(veto), Weighted Evaluation when top-truncated votes are allowed is NP-hard even if $r = 0$, the votes are drawn independently, and the distribution over each vote has a positive 
probability for at most 2 of the votes. 
\end{theorem}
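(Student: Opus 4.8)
The plan is to reduce from Partition. First I would pin down what Weighted Evaluation with $r=0$ asks here: since the votes are drawn independently and each voter's distribution is supported on at most two ballots, every profile obtained by choosing, for each voter, one ballot from its support occurs with strictly positive probability. Hence $P(p\text{ wins})>0$ if and only if there is some such choice of ballots under which $p$ wins eliminate(veto), and it suffices to build a collection of voters, each offered a binary choice, so that $p$ can be made to win exactly when a Partition solution exists. Membership in NP is immediate, since a positive-probability profile in which $p$ wins is a polynomial certificate.

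Given a Partition instance $\{k_i\}_{1\le i\le t}$ with $\sum_i k_i=2K$, I would use three candidates $p,a,b$. For each $i$ there is a voter $u_i$ of weight $k_i$ whose distribution places positive probability on exactly the two complete ballots $(a\succ p\succ b)$ and $(b\succ p\succ a)$; in addition there is a single deterministic voter $u_0$ of weight $1$ casting $(p\succ b\succ a)$. In a given realization let $W$ be the total weight of the $u_i$ casting $(a\succ p\succ b)$; as the $u_i$ range over their choices, $W$ ranges over exactly the subset sums of $\{k_i\}$. The claim is that $p$ wins if and only if $W=K$, which occurs for some realization precisely when the instance is a yes-instance. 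Note that all ballots are complete, so no truncation is needed at all (and a fortiori the result holds when truncation is allowed), which sharpens the contrast with Corollary~\ref{vetoCorr}.

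For correctness I would compute the two rounds directly. The last-ranked candidate carries the veto weight, so $\mathrm{veto}(a)=(2K-W)+1$, $\mathrm{veto}(b)=W$, and $\mathrm{veto}(p)=0$; since $\mathrm{veto}(a)+\mathrm{veto}(b)=2K+1>0$, one of $a,b$ always has veto weight exceeding $\mathrm{veto}(p)=0$, so $p$ survives round~1, and $a$ is the most-vetoed iff $2K-W+1>W$, i.e. iff $W\le K$ (there is never a tie, as $\mathrm{veto}(a)-\mathrm{veto}(b)$ is odd). If $W\le K$ then $a$ is eliminated and the runoff is $p$ versus $b$, where $p$ is ranked above $b$ by ballots of total weight $W+1$ and below by weight $2K-W$, so $p$ wins iff $W+1>2K-W$, i.e. iff $W\ge K$; together with $W\le K$ this gives $p$ wins iff $W=K$. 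If instead $W\ge K+1$ then $b$ is eliminated and the runoff is $p$ versus $a$, where $p$'s supporting weight is $2K-W+1\le K<W$, so $a$ wins and $p$ loses. Thus $p$ is the winner exactly when $W=K$, and in that case $p$ wins strictly (hence it is the unique winner as well as a co-winner, so the argument is valid in either winner model).

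The delicate point, and the main thing to get right, is forcing the exact equality $W=K$ rather than a mere threshold: a single veto-elimination or a single runoff is monotone in $W$ and would only produce an easily satisfiable inequality, consistent with CWCM being in $P$ rather than hard. The device that repairs this is the weight-$1$ ballot $u_0$, whose only role is to shift each relevant margin by exactly one unit; because the combined contribution of the $u_i$ to every margin is even, this unit shift collapses the window in which $p$ wins down to the single integer $W=K$ while simultaneously keeping every comparison strict, so no tie can manufacture a spurious win. Getting this parity-and-offset calibration exactly right is the crux; the remaining steps are the routine round-by-round verification above.
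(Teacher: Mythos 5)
Your proof is correct and uses essentially the same reduction as the paper: a Partition instance encoded by weight-$k_i$ voters each randomizing over $(a,p,b)$ and $(b,p,a)$, plus a single deterministic weight-$1$ voter ranking $p$ first, so that $p$ survives the first veto-elimination round and wins the runoff exactly when the realized split equals $K$. Your version merely swaps $a$ and $b$ in the deterministic ballot and spells out the round-by-round and parity analysis that the paper leaves implicit.
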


%


\section{Conclusion and Future Work}

Much of the earlier work in computational social choice has made the assumption that the voters specify complete preference orderings. However, there are many situations where the agents may not be 
willing or may simply not be able to provide this information. In this paper, we studied the problem of manipulation of weighted elections when the agents are allowed to specify top-truncated 
preferences and also looked at the impact on manipulation when there is uncertainty about the non-manipulators' votes. We devoted most of this article studying the first problem and in particular we 
provided general results for constructive and destructive manipulation in all scoring rules, elimination versions of all scoring rules, the plurality with runoff rule, a family of election systems 
known as Copeland$^{\alpha}$, and the maximin protocol. These results are summarized in Table \ref{results}. 

As was also noted by Narodytska and Walsh in their study of manipulation with top-truncated votes for Borda, STV, and Copeland$^{0.5}$ \cite{nina}, there are three broad trends that we can observe. 
First is the case where top-truncated voting has a strong impact on manipulation and it in turn results in a decrease in the worst-case complexity of manipulation as compared to the complete votes 
case. Examples of this are all the scoring rules when using the round-up evaluation scheme and the maximin rule. Second is the case where top-truncated voting has some impact on manipulation and in 
fact even causes more strategic voting, but yet the worst-case complexity of manipulation remains the same as compared to the complete votes case. Some examples of voting rules which fall into this 
category are the Copeland$^{\alpha}$, $X_{\downarrow}$ for any scoring rule $X$ that is not isomorphic to plurality or veto etc. Lastly, we also see that there are voting rules for which 
top-truncated voting has no impact whatsoever on strategic voting. For instance, top-truncated voting has no impact on STV and eliminate(Veto). 

\begin{table}[t]
  \centering  

 \begin{tabular}{| c |c|c|} 
    \cline{1-3} 
    \cline{1-3}    

    \multicolumn{1}{|c|}{\multirow{2}{*}{\textbf{  Voting Rule  }}} & \textbf{  CWCM  } & \textbf{  DWCM  }\\ 
    \multicolumn{1}{|c|}{} & (\#cand) & (\#cand) \\ \cline{1-3}   

    \multicolumn{1}{|c|}{\textbf{$X_{\uparrow}$}} & \textbf{P} & \multicolumn{1}{c|}{\multirow{6}{*}{P}} \\ \cline{1-2}
    \multicolumn{1}{|c|}{$\text{Plurality}_{\downarrow}$} & P  & \multicolumn{1}{c|}{}\\
    \multicolumn{1}{|c|}{\textbf{$\text{Veto}_{\downarrow}$}} & \textbf{P} & \multicolumn{1}{c|}{}\\   
    \multicolumn{1}{|c|}{\textit{$X^1_{\downarrow}$}} & \textit{NP-c (3)} & \multicolumn{1}{c|}{} \\ \cline{1-2}
    \multicolumn{1}{|c|}{$\text{Plurality}_{\text{av}}$} & P  & \multicolumn{1}{c|}{} \\
    \multicolumn{1}{|c|}{\textit{$X^2_{\text{av}}$}} & \textit{NP-c (3)} & \multicolumn{1}{c|}{} \\ \hline   

   \multicolumn{1}{|c|}{eliminate(Veto)} & P & \multicolumn{1}{c|}{P} \\
   \multicolumn{1}{|c|}{eliminate($X^3$)} & NP-c & \multicolumn{1}{c|}{NP-c (3)} \\ \hline

   \multicolumn{1}{|C{2.3cm}|}{Plurality with runoff} & NP-c (3) & \multicolumn{1}{c|}{NP-c (3)} \\ \cline{1-3}    
   \multicolumn{1}{|C{2.3cm}|}{\textit{Copeland$^{\alpha}$, $\alpha \in [0, 1)$}} & \textit{NP-c (3)} & \multicolumn{1}{c|}{P} \\ \cline{1-3}  
   \multicolumn{1}{|c|}{\textbf{Maximin}} & \textbf{P} & \multicolumn{1}{c|}{P} \\
   \cline{1-3}   
   \end{tabular}      
 
  \begin{tabular}{ll}     
     $X:$ All scoring rules & $X^1:$ All scoring rules except plurality and veto \\
     $X^2:$ All scoring rules except plurality & $X^3:$ All scoring rules except veto \medskip
    \end{tabular}   
           
           \caption{Complexity of CWCM and DWCM with Top-truncated Votes. The entries in bold indicate that there is a change in complexity of CWCM as compared to the case of complete votes, the 
non-highlighted ones indicate that there is no change to the worst-case complexity as compared to the case of complete votes, and the italicized entries indicate those rules for which there is more 
opportunity for manipulation but for which the worst-case complexity is still the same as compared to the case of complete votes.}  
 \label{results} 
 
\end{table}

The remainder of this article was devoted to exploring the second avenue i.e. when there is uncertainty about the non-manipulators' votes. Here we discussed two possible ways in which the uncertainty 
can be modeled and we also showed that in both cases, under uncertainty, even individual manipulation was hard when constructive coalitional manipulation was hard. To the best of our knowledge, we 
are the first to study the impact on manipulation with top-truncated ballots when there is uncertainty about the non-manipulators' votes. 

There are many possible avenues for future work. Foremost would be to look at other forms of ``partial votes'' besides the top-truncated ones alone that we consider here. Although Fitzsimmons and 
Hemaspaandra have started some work in this direction \cite{fitzsimmons}, they only look at specific protocols. We feel that it would be interesting to look at how other voting rules behave and 
also see if there is any scope to arrive at some general results. Another interesting problem could be to look at restricted domains of preferences like single-peaked preferences and see the impact 
of top-truncated voting on them. Finally, despite proving hardness of manipulation for several voting protocols, a possible criticism could be that the results are in the worst-case and that we 
use NP-hardness -- which in many cases does not necessarily reflect the actual difficulty in practice -- as the complexity measure. Therefore, another interesting research direction would be to look 
at the average-case complexity of manipulation with top-truncated votes (or ``partial'' votes in general), just like it has been done for the case of complete votes (for e.g. \cite{friedgut}, 
\cite{procaccia2}, \cite{procaccia3}).

\bibliographystyle{llncsbib}
\bibliography{paper}

\appendix
\section*{Appendix: Proofs}

\subsubsection{Proof of Theorem \ref{FDSS}.} 
 It is easy to see that Fixed-Difference Subset Sum is in NP. To prove NP-hardness, we show a reduction from an arbitrary instance of the Partition (P1) problem. Let the arbitrary instance be $\{k_1, 
\cdots, k_t\}$ with $\sum_i k_i = 2K$. Now, construct the following instance $\{l_1, \cdots, l_t, l'_1, \cdots, l'_t\}$ of Fixed-Difference Subset Sum(P2), where $l_i = k_i + 2^{n + i}$, $l'_i = 
2^{n + i}$, and $n = \lceil \log{2K} \rceil$.

Suppose there exists a partition $S_1, S_2$ for P1. Then P2 has subsets $T_1$, $T_2$ such that $\sum T_1 - \sum T_2 = K + \sum_{i=1}^{t} 2^{i+n}$, where $T_1 = \{l_i\}_{i 
| k_i \in S_1} \cup \{l'_i\}_{i | k_i \in S_2}$ and $T_2 = \emptyset$.

Conversely, suppose there exists subsets $T_1$, $T_2$ in P2 such that $\sum T_1  -  \sum T_2 = K + \sum_{i=1}^{t} 2^{i+n}$. Now, it is easy to argue that none of $l_i$ or $l'_i$ for $i = 1, 
\cdots, t$ can belong to $T_2$, because if so then the second term ($\sum_{i=1}^{t} 2^{i+n}$) of $\sum T_1  -  \sum T_2$ will not be attainable. Therefore, we can construct $S_1$ and $S_2$ such 
that $S_1 = \{k_i\}_{i | l_i \in T_1}$ and $S_2 = \{k_i\}_{i | l_i \notin T_1}$, and this in turn implies that P1 has a partition.
\qed

\subsubsection{Proof of Theorem \ref{rdThm}.} 
Since there are only three candidates, the scoring vector for the corresponding positional scoring rule is defined by $\langle \alpha_1, \alpha_2, \alpha_3 \rangle$, where $\alpha_1 > \alpha_2 > 
\alpha_3 = 0$ (because $\alpha_1 = \alpha_2$ is isomorphic to veto, $\alpha_2 = \alpha_3$ is isomorphic to plurality, and $\alpha_3$ can be taken to be zero since translating the scores in a 
scoring rule does not affect the outcome of the rule). Also, note that if the three candidates are $p, a,$ and $b$, each manipulator votes in one of the following ways\footnotemark : $(p), (p, a, b), 
(p, b, a)$, where for $(p)$ candidate $p$ gets a score $\alpha_2$. 

\footnotetext{$(p, a, b) \equiv (p, a)$, and $(p, b, a) \equiv (p, b)$ }

The problem is in NP since winner determination for any scoring rule can be done in polynomial time. To show NP-hardness, we proceed by considering three cases: 1) $\alpha_1 > \frac{3}{2}\alpha_2$ 
2) $\alpha_1 < \frac{3}{2}\alpha_2$ 3) $\alpha_1 = \frac{3}{2}\alpha_2$. For the first two cases, we reduce an arbitrary instance of the Partition problem to an instance of CWCM, and for the third 
case we show a reduction from the Fixed-Difference Subset Sum problem.  

\begin{case}[{$\alpha_1 > \frac{3}{2}\alpha_2$}] 
Given a Partition instance $\{k_i\}_{1\leq i\leq t}$ summing to $2K$, construct the following instance of CWCM, where $p, a,$ and $b$ 
are the three candidates. In S, let there be a voter of weight $(2\alpha_1 - \alpha_2)K$ voting for $(a, b, p)$ and $(b, a, p)$ each. As a result, $a$ and $b$ have a score of $(2\alpha_1 - 
\alpha_2)(\alpha_1 + \alpha_2)K$ each. In $T$, let each $k_i$ have a vote of weight $(\alpha_1 + \alpha_2)k_i$.

Suppose there exists a partition. Let those manipulators in one partition vote $(p, a, b)$ and those in the other vote $(p, b, a)$. Then the score of $p, a$ and $b$, is $2\alpha_1(\alpha_1 + 
\alpha_2)K$, and so $p$ is a winner. 

Conversely, suppose there exists a manipulation in favor of $p$. Let $x, y$, and $z$ be the sum of the $k_i$'s of the manipulators in $T$ who vote $(p, a, b), (p, b, a),$ and $(p)$, respectively. So 
now, the score of $p$ is $((x  +  y)\alpha_1  +  z\alpha_2)(\alpha_1  +  \alpha_2)$, while that of $a$ and $b$ is $((2\alpha_1  -  \alpha_2)K  +  x\alpha_2)(\alpha_1  +  \alpha_2)$ and $((2\alpha_1  
- \alpha_2)K + y\alpha_2)(\alpha_1  +  \alpha_2)$, respectively. Since there exists a successful manipulation, score of $p$ should be at least as large as that of $a$, and so we have $((x  +  
y)\alpha_1  +  z\alpha_2)(\alpha_1  +  \alpha_2) \geq ((2\alpha_1  -  \alpha_2)K  +  x\alpha_2)(\alpha_1 +  \alpha_2)$. Using the fact that $x + y + z = 2K$, this simplifies to $(K - x)\alpha_2 \geq 
z(\alpha_1 - \alpha_2)$ (1). Again, the score of $p$ should be at least as large as that of $b$, so we have $(K - y)\alpha_2 \geq z(\alpha_1 - \alpha_2)$ (2). Adding (1) and (2) and simplifying it, 
we have $z(2\alpha_1 - 3\alpha_2) \leq 0$. Now, since we assumed  $\alpha_1 > \frac{3}{2}\alpha_2$, this implies that $z \leq 0$. But $z$ cannot be less than 0, so it has to be equal to 0.  Plugging 
$z = 0$ in (1) and (2), we have $x \leq K$ and  $y \leq K$ respectively. This together with the fact that $x + y + z = 2K$  implies that $x = y = K$, and therefore there exists a partition.
\end{case}

\begin{case}[{$\alpha_1 < \frac{3}{2}\alpha_2$}] 
Given a Partition instance, construct the following instance of CWCM. In S, let there be a voter of weight $15K$ voting for $(b, a, p)$, 
a voter of weight $5K$ voting for $(b, p, a)$, a voter of weight $11K$ voting for $(a, p, b)$, a voter of weight $9K$ voting for $(a, b, p)$, and a voter of weight $7K$ voting for $(p, b, a)$ and 
$(p, a, b)$ each. As a result, the scores of $a$, $b$, and $p$ are $(20\alpha_1 + 22\alpha_2)K$, $(20\alpha_1 + 16\alpha_2)K$, and $(14\alpha_1 + 16\alpha_2)K$ respectively. In $T$, let each $k_i$ 
have a vote of weight $6k_i$. 

Suppose there exists a partition. Let those manipulators in one partition (who weight to $6K$) vote $(p, b, a)$ and those in the other vote $(p)$. Then the score of all the three candidates is 
$20\alpha_{1}K + 22\alpha_{2}K$, and so $p$ is a winner. 

Conversely, suppose there exists a manipulation in favor of $p$. Let $x, y$, and $z$ be the total weight of manipulators in $T$ who vote $(p, a, b), (p, b, a),$ and $(p)$ respectively. So now, the 
score of $p$ is $(x  +  y)\alpha_1  +  z\alpha_2 + (14\alpha_{1} + 16\alpha_{2})K$, while that of $a$ and $b$ is $(20\alpha_1 + 22\alpha_2)K + x\alpha_2$ and $(20\alpha_1 + 16\alpha_2)K + y\alpha_2$ 
respectively. Since there exists a successful manipulation, score of $p$ should be at least as large as that of $a$, and so we have $(x  +  y)\alpha_1  +  z\alpha_2 + 14\alpha_{1}K + 16\alpha_{2}K 
\geq (20\alpha_1 + 22\alpha_2)K + x\alpha_2$. Using the fact that $x + y + z = 12K$, this simplifies to $6(\alpha_1 - \alpha_2)K  - x\alpha_2 \geq z(\alpha_1 - \alpha_2)$ (1). Again, the score 
of $p$ should be at least as large as that of $b$, so we have $6\alpha_{1}K - y\alpha_2 \geq z(\alpha_1 - \alpha_2)$ (2). Adding (1) and (2) and simplifying it, we have $(6K - z)(2\alpha_1 - 
3\alpha_2) \geq 0$. Now, since we assumed $\alpha_1 < \frac{3}{2}\alpha_2$, this implies that $(6K - z) \leq 0$, or $z \geq 6K$. Plugging $z \geq 6K$ in (1) and (2), we have $x \leq 0$, and  $y \leq 
6K$, respectively. But then $x$ cannot be less than 0, so it has to be equal to 0, and this in turn results in $z \leq 6K$ in (1). But again, $z$ cannot not be both greater than and lesser than equal 
to $6K$. So, $z$ has to be equal to $6K$, and since  $x + y + z = 12K$, $y = 6K$. This implies there exists a partition.
\end{case}

\begin{case}[{$\alpha_1 = \frac{3}{2}\alpha_2$}] 
Consider the same instance of CWCM as in case 2. The scores of $a$, $b$, and $p$ are $(20\alpha_1 + 22\alpha_2)K$, $(20\alpha_1 + 
16\alpha_2)K$, and $(14\alpha_1 + 16\alpha_2)K$ respectively. In $T$, let each $k_i$ have a vote of weight $6k_i$. 

Suppose there exists $S_1$, $S_2$ such that $\sum S_1 - \sum S_2 = K$. Let those manipulators who are in $S_1$ vote $(p, b, a)$, those in $S_2$ vote $(p, a, b)$, and let all the remaining 
manipulators vote $(p)$. If $x, y,$ and $z$ denote the sum of $k_i$'s of the manipulators who vote for $(p, a, b), (p, b, a),$ and $(p)$, respectively, then the scores of $p, a,$ and 
$b$ are $9(x  + y){\alpha_2}  +  6z\alpha_2 + 37\alpha_{2}K$, ${52\alpha_2}K + 6x\alpha_2$, and $46\alpha_2K + 6y\alpha_2$, respectively. Now if there existed a 
manipulation, then the score of $p$ has to be at least as large as that of $a$ and $b$. Let us consider $p$ and $a$ first. Whatever follows can be replicated for $b$. Suppose $s(p) \geq s(a)$. This 
implies $9(x  + y){\alpha_2}  +  6z\alpha_2 + 37\alpha_{2}K \geq {52\alpha_2}K + 6x\alpha_2$. Simplifying this we have, $x + 3y + 2z \geq 5K$. But since $y - x = K$ and $x + 
y + z = 2K$, we know that $x + 3y + 2z = 5K$, and hence our assumption that $s(p) \geq s(a)$ is true. Doing the same with respect to $p$ and $b$, we will see that $s(p) = s(b)$. As a result, we can 
conclude that existence of $S_1$, $S_2$ such that $\sum S_1 - \sum S_2 = K$ results in a successful manipulation for $p$.

Conversely, suppose there exists a manipulation in favor of $p$. This implies that the score of $p$ is at least as much as that of $a$, and from above we know that this in turn results in the 
inequality $x + 3y + 2z \geq 5K$. Using the fact that $x + y + z = 2K$, we have $y - x \geq K$. Similarly, comparing $p$ and $b$ we have, $y - x \leq K$. But then, $ y - x$ cannot be both greater and 
lesser than equal to $K$ at the same time. So $y - x$ has to be equal to $K$, and this in turn implies that there exists two sets  $S_1, S_2$ such that $\sum S_1 - \sum S_2 = K$, where $y = \sum S_1$ 
and $x = \sum S_2$. \qed
\end{case}

\subsubsection{Proof of Theorem \ref{avThm}.} 
Like in Theorem \ref{rdThm}, since there are only three candidates, the scoring vector for the corresponding positional scoring rule is defined by $\langle \alpha_1, \alpha_2, \alpha_3 
\rangle$, where $\alpha_1 \geq \alpha_2 > \alpha_3 = 0$. Also, note that if the three candidates are $p, a,$ and $b$, each manipulator votes in one of the following ways: $(p), (p, a, b), 
(p, b, a)$, where for $(p)$ candidate $p$ gets a score $\alpha_1$, $a$ and $b$ receive a score of $(\alpha_2/2)$. 

  The proof uses a reduction from the Fixed-Difference Subset Sum problem and is very similar to the one in case 3 Theorem \ref{rdThm}. Given an instance of Partition, construct the following 
instance of CWCM where in $S$ we have a voter of weight $(4\alpha_1 + \alpha_2)K$ voting for $(b, a, p)$, a voter of weight $(2\alpha_1 - \alpha_2)K$ voting for $(a, b, p)$, and a voter of 
weight $2(\alpha_1 + \alpha_2)K$ voting for $(a, p, b)$. As a result, the scores of $a, b,$ and $p$ are $(4\alpha_1 + \alpha_2)(\alpha_1 + \alpha_2)K$, $(4\alpha_1 - \alpha_2)(\alpha_1 + \alpha_2)K$, 
and $2\alpha_2(\alpha_1 + \alpha_2)K$, respectively. In T, let each $k_i$ have a vote of weight $2(\alpha_1 + \alpha_2)k_i$.

Suppose there exists $S_1$, $S_2$ such that $\sum S_1 - \sum S_2 = K$. Let those manipulators who are in $S_1$ vote $(p, b, a)$, those in $S_2$ vote $(p, a, b)$, and let all those remaining vote 
$(p)$. If $x, y,$ and $z$ denote the sum of the $k_i$'s of the manipulators who vote for $(p, a, b), (p, b, a),$ and $(p)$, respectively, then the scores of $p, a,$ and $b$ are $(4\alpha_1 + 
2\alpha_2)(\alpha_1 + \alpha_2)K$, $((4\alpha_1 + \alpha_2)K + 2(x\alpha_2 + z\alpha_2/2) )(\alpha_1 + \alpha_2)$, and $((4\alpha_1 - \alpha_2)K + 2(y\alpha_2 + z\alpha_2/2))(\alpha_1 + \alpha_2)$, 
respectively. Now if there existed a manipulation, then the score of $p$ has to be at least as large as that of $a$ and $b$. Let us consider $p$ and $a$ first. Whatever follows can be replicated for 
$b$. Suppose $s(p) \geq s(a)$. This implies that $(4\alpha_1 + 2\alpha_2)K \geq (4\alpha_1 + \alpha_2)K + 2x\alpha_2 + z\alpha_2$. Simplifying this we have, $2x + z \leq K$. But since $y - x = K$ and 
$x + y + z = 2K$, we know that $2x + z = K$, and hence our assumption that $s(p) \geq s(a)$ is true. Doing the same with respect to $p$ and $b$, we will see that $s(p) = s(b)$. As a result, we can 
conclude that existence of $S_1$, $S_2$ such that $\sum S_1 - \sum S_2 = K$ results in a successful manipulation for $p$. 

Conversely, suppose there exists a manipulation in favor of $p$. This implies that the score of $p$ is at least as much as that of $a$, and from above we know that this in turn results in the 
inequality $2x + z \leq K$ (1). Similarly, comparing $p$ and $b$ we have, $2y + z \leq 3K$ (2). Now, using the fact that $x + y + z = 2K$, we know that inequality (1) reduces to $y - x \geq K$ and 
inequality (2) reduces to $y - x \leq K$. But then, $ y - x$ cannot be both greater and lesser than equal to $K$ at the same time. So $y - x$ has to be equal to $K$, and this in turn implies that 
there exists two sets $S_1, S_2$ such that $\sum S_1 - \sum S_2 = K$, where $y = \sum S_1$ and $x = \sum S_2$.  \qed

\subsubsection{Proof of Theorem \ref{vetoThm}.} 
  Consider an arbitrary set $W$ of top-truncated votes which -- along with the set $S$ of non-manipulators' votes -- results in an elimination order $e = (c_1, c_2, \cdots, c_m = p)$, where $p$ is 
the preferred candidate, and $c_i$ is the candidate eliminated in the $i$th round. Now, consider the set of votes $X$ such that each vote in $W$ is replaced by $p = c_m \succ c_{m-1} \succ \cdots 
\succ c_1$. $X$ along with $S$ results in the same elimination order $e$. Therefore, we see that any manipulation that can be achieved by a set of top-truncated votes can be achieved by casting an 
equivalent set of complete votes. \qed

\subsubsection{Proof of Theorem \ref{antiwcm}.} 
Assume there exists an arbitrary set $W$ of top-truncated votes which results in $d$ receiving the lowest score. For each of the top-truncated votes in $W$, let us complete them in the following way: 
If $d$ is included in the vote, append all the other candidates who are not part of it in any arbitrary order. If $d$ is not there, place $d$ at the bottom of the preference ordering (as the 
$m$th preferred candidate) and the rest in any arbitrary order. Completing the votes as above does not change the candidate with the lowest score. \qed

\subsubsection{Proof of Theorem \ref{copeland-alpha}.} 
 It is easy to show that the problem is in NP. To show that it is NP-hard, we use a reduction from an arbitrary instance of Fixed-Difference Subset Sum problem. Let $p, a,$ and $b$ be the three 
candidates. In $S$, let there be a voter of weight $3K$ voting for $(a, b, p)$ and a voter of weight $K$ voting for $(b, a, p)$. In $T$, let each $k_i$ have a vote of weight $2k_i$.

Suppose there exists $S_1, S_2$ such that $\sum S_1 - \sum S_2 = K$. In Copeland$^{\alpha}$, it can be assumed that all the manipulators rank $p$ first. So, let the manipulators in $S_1$ vote $(p, 
b, a)\footnotemark$, those in $S_2$ vote $(p, a, b)$, and let the rest vote for $(p)$. If $N_{V}(r, s)$ denotes the total number of votes in $V$ which rank $r$ prior to $s$ and $D_{V}(r, s) = 
N_{V}(r, s) - N_{V}(s, r)$, then $D_{S \cup T}(p, a) = 0$ and $D_{S \cup T}(p, b) = 0$. Therefore, the score of $p$, $s(p) = 2\alpha$. Also since $\sum S_1 - \sum S_2 = K$,  $D_{T}(a, b) = -2K$, 
while $D_S(a,b) = 2K$. Therefore, $D_{S \cup T}(a, b) = 0$ and so, both receive a score $2\alpha$. Since all of them have the same score, $p$ is a winner.

\footnotetext{In Copeland$^{\alpha}$ with 3 candidates, $(p, a, b) \equiv (p, a)$, and $(p, b, a) \equiv (p, b)$}

Conversely, suppose there exists a successful manipulation in favor of $p$. If $x, y$, and $z$, denote the sum of $k_i$'s of the manipulators in $T$ who vote $(p, a, b), \allowbreak (p, b, a),$ and 
$(p)$, respectively, then without taking into account the pairwise election between $a$ and $b$ in $T$, the score of $p$, $a$, and $b$ is $2\alpha, 1 + \alpha,$ and $\alpha$, respectively. Now since 
$2\alpha < 1 + \alpha$ for all rational $\alpha \in [0, 1)$, therefore, the only way $p$ would win this is if including the pairwise election between $a$ and $b$ in $T$ results in a tie between them. 
So this implies that $D_{S \cup T}(a, b) = 2K + 2x - 2y = 0$ and that  $y - x = K$. This in turn implies that there exists sets $S_1$ and $S_2$ such that $\sum S_1 - \sum S_2 = K$, where $y = \sum 
S_1$ and $x =  \sum S_2$.\qed

\subsubsection{Proof of Theorem \ref{maxminThm}.} 
 Let $W$ be an arbitrary set of top-truncated votes which -- along with the set S of non-manipulators' votes -- results in $p$ being a winner. Since in the maximin rule moving $p$ to the top will 
never hurt $p$, we can safely assume that all the votes in $W$ have $p$ at the top. Now for every vote in $W$, replace it by $(p)$. By doing so we see that $p$'s score does not change. Also note 
that for all the other candidates their scores can only decrease or stay the same, but can never increase. Therefore, any constructive manipulation achieved for $p$ can be achieved if all 
manipulators just vote $(p)$. \qed

\subsubsection{Proof of Theorem \ref{dwcmThm1}.} 
Consider a voting rule $X$ that is monotone and is based on numerical scores. Let the destructive manipulation be against the candidate $h$. Now, suppose there exists an arbitrary set of 
top-truncated votes $W$ that -- along with the set $S$ of non-manipulators' votes -- results in the destructive manipulation of $h$ in $X$. Since $X$ is based on scores we will have a final ordering 
of the candidates after the election. Let $e: c_1 (\neq h) \succ c_2 \succ \cdots \succ c_m$ denote that ordering. Next, consider the set of votes $W'$ which is formed by completing the votes in $W$ 
in the following way: replace each vote in $W$ by placing $c_1$ at the top, $h$ at the bottom (i.e. at the $m$th position), and the rest of the candidates in any arbitrary order. Since $X$ 
is monotone, $W'$ along with $S$ cannot result in the score of $c_1$ decreasing and nor can it result in the score of $h$ increasing. Therefore, if $W$ resulted in the destructive manipulation of 
$h$ then so should $W'$. \qed 

\subsubsection{Proof of Theorem \ref{eliminateNPCdest}.} 
 Since there are only three candidates, the scoring vector for the corresponding positional scoring rule is defined by $\langle \alpha_1, \alpha_2, \alpha_3 \rangle$, where $\alpha_1 > \alpha_2 \geq  
\alpha_3 = 0$. Showing that the problem is in NP is easy. To show NP-hardness, we use a reduction from an arbitrary instance of Anti-WCM$\langle S, T, h \rangle$ with 3 candidates (see Corollary 
\ref{corollaryAntiWCM}), where $a, b$, and $h$ are the three candidates, and $h$ is the disliked candidate. In the DWCM instance we construct, we use the same set of candidates, the same set of 
manipulators $T$, and to the $S$ from the Anti-WCM instance we add the following set $S'$ of voters such that $K$ is greater than the sum of the weights in $S$ and $T$ combined. In each case, we add 
1 voter of the corresponding type and weight specified below. 
\begin{align*}
 K&: (a, h, b) \\ 2K&: (h, a, b) \\ K&: (b, h, a) \\ 2K&: (h, b, a) \\ 3K&: (a) \\ 3K&: (b)
\end{align*}
We set $h$ to be the disliked candidate in the DWCM instance.

Suppose there was a way to make $h$ receive the lowest score in the Anti-WCM instance. If score$_{S}(a)$ denotes the score candidate $a$ receives from $S$, then this implies that score$_{S \cup 
T}(a)$ $>$ score$_{S \cup T}(h)$, and score$_{S \cup T}(b)$ $>$ score$_{S \cup T}(h)$. Also, note that all the three candidates $a, b$, and $h$ are tied in $S'$ as each of them receive a score of 
$4\alpha_1K + 2\alpha_2K$. This in turn implies that $h$ receives the lowest score in the DWCM instance, and so will be eliminated in the first round. Thus, existence of a successful manipulation in 
the Anti-WCM instance ensures the existence of a successful manipulation in the DWCM instance. 

Conversely, suppose there exists a successful destructive manipulation against $h$ in the DWCM instance. We first show that this is possible only if $h$ is eliminated in the first round in 
eliminate$(X)$. To do so, let us assume it were not the case and that one of $a$ or $b$ was eliminated in the first round. Let us consider $a$ first. If $a$ was eliminated in the first round then the 
votes in $S'$ would now be: 
\begin{align*}
 K&: (h, b) \\ 2K&: (h, b) \\ K&: (b, h) \\ 2K&: (h, b) \\ 3K&: (b)
\end{align*}

Since the elimination of $a$ means that they are only two candidates remaining, from now on we can assume our protocol to be equivalent to plurality (since any scoring rule is equivalent to plurality 
when there are only two candidates). So now, score$_{S'}(h)$ $-$ score$_{S'}(b) = K$ and since $K$ is greater than sum of the weights in $S$ and $T$ combined this implies that in the subsequent round 
$b$ will be eliminated, thus resulting in $h$ winning the DWCM instance. Therefore, there cannot be a destructive manipulation against $h$ in the DWCM instance if $a$ is eliminated in the first 
round. Similarly, by doing things identically for candidate $b$, we can see that a destructive manipulation against $h$ will not be possible if $b$ is eliminated in the first round. This in 
turn leads us to conclude that a successful destructive manipulation against $h$ is possible only if $h$ is eliminated in the first round. But then, since all the three candidates are tied in $S'$, 
the only way this can happen is if $h$ receives the lowest score in $S$. Or in other words, a successful destructive manipulation against $h$ in the DWCM instance is possible only if there 
exists a successful manipulation against $h$ in the Anti-WCM instance.  \qed

\subsubsection{Proof of Theorem \ref{extbribEval}.} 
 Consider an arbitrary instance of the Weighted Extension-bribery with $S$, and $p$, and the same instance for the Evaluation problem. Now, it is clear that if there exists an extension in Weighted 
Extension-bribery problem, then $p$ wins with probability greater than zero in the Evaluation problem. Conversely, if $p$ wins the non-zero probability in the Evaluation problem then this implies 
that there is at least on extension where it wins. \qed

\subsubsection{Proof of Theorem \ref{ext-eval}.} 
 For all the above mentioned protocols except plurality$_{\downarrow}$, the best we can do is to extend each top-truncated vote by placing $p$ at its end (i.e. place $p$ as $k$th candidate if $k-1$ 
candidates are already ranked), if it isn't already present. In case of plurality$_{\downarrow}$, the best strategy is to complete each of the top-truncated votes by placing $p$ at the topmost 
position possible followed by all the other as-yet unranked candidates in any arbitrary order. \qed

\subsubsection{Proof of Theorem \ref{elVeto}.} 
Suppose there was an arbitrary extension $W$ which resulted in $p$ winning. Let the corresponding elimination order be $e = (c_1, c_2, \allowbreak \cdots, c_m = p)$, where $c_i$ is the candidate 
eliminated in the $i$th round. Now, we claim that the same elimination order can be achieved by doing the following:
\begin{itemize}
 \item In each of the of the top-truncated vote, complete it by placing the candidates in the reverse order in which they appear in $e$. That is, place $p$ if not already present, followed by 
$c_{m-1}$ if not present, and so on until $c_1$.
\end{itemize}
Doing the above results in the same elimination order as $e$. This can be shown through an inductive argument. When there are $m$ candidates, $c_1$ which is eliminated first in $e$ has been 
placed last wherever possible and this in turn will result it in getting eliminated first in our completion. Once $c_1$ is eliminated, we have $m-1$ candidates with $c_2$ placed last wherever 
possible and therefore $c_2$ will be eliminated next. Continuing this way, guarantees the elimination order $e$.

To solve Weighted Extension-bribery, the campaign manager can try out all possible elimination orders, extend the votes as outlined above, and see if any of them results in $p$ winning. Doing so will 
only take polynomial time since the number of candidates are bounded.  \qed

\subsubsection{Proof of Theorem \ref{eVCWIM}.} 
 Since the Evaluation problem is in $P$, the manipulator can try out all possible complete orders to check if manipulation is possible. This is enough because, for eliminate(veto), any 
manipulation that can be achieved by top-truncated votes can be achieved by completing the vote appropriately (see Theorem \ref{vetoThm}). \qed

\subsubsection{Proof of Theorem \ref{eV-WE}.} 
 We show this by a reduction from an arbitrary instance of the Partition problem to the following instance of Weighted Evaluation. Let $a, b,$ and $p$ be three candidates. Let there be a vote of 
weight 1 for $(p, a, b)$. For each $k_i$ in the partition instance, let it have a weight $k_i$ and vote for $(a, p, b)$ and $(b, p, a)$ with probability 1/2 each.

Now, we can see that $p$ wins if and only if $(a, p, b)$ and $(b, p, a)$ are voted by exactly $K$ of the vote weight, because failing to do so would mean that $p$ will be eliminated in the second 
round. But then this is possible if and only if there exists a partition. \qed

\end{document}